\pdfoutput=1
\documentclass[acmsmall,screen,nonacm]{acmart}

\usepackage{textcomp} 
\usepackage{xspace}
\usepackage{array}
\usepackage[most]{tcolorbox}
\usepackage{enumitem}
\usepackage{thmtools}
\usepackage{tikz}
\usetikzlibrary{arrows.meta,positioning,fit}

\newcommand{\mypara}[1]{\par\addvspace{\smallskipamount}\noindent\textbf{#1.}}
\newcommand{\temph}[1]{\emph{#1}}
\newcommand{\remove}[1]{}

\newcommand{\calV}{\mathcal{V}}
\newcommand{\calG}{\mathcal{G}}

\newcommand{\calN}{\mathbb{N}}
\newcommand{\calA}{\mathcal{A}}

\newcommand{\calC}{\mathcal{C}}
\newcommand{\calB}{\mathcal{B}}

\newcommand{\ia}{\textit{i}}
\newcommand{\ib}{\textit{ii}}

\newcounter{pc}

\setlist{nosep, leftmargin=*}
\setlist{itemsep=1pt, topsep=3pt, leftmargin=*}

\theoremstyle{acmdefinition}
\newtheorem{remark}{Remark}

\title[Volition Elicitation]{Volition Elicitation:  Operational Semantics for People and Their Machines}

\author{Ehud Shapiro}
\affiliation{%
  \institution{London School of Economics and Weizmann Institute of Science}
  \country{UK and Israel }
}

\begin{document}

\begin{abstract}
The most prevalent distributed systems today include people and their personal machines (smartphones). In such systems, computations are driven by people's volitions: a payment when a person wishes to pay someone, befriending when two people wish to become friends, etc. Volition-Guarded Multiagent Atomic Transactions were proposed as an abstract specification language for such systems, in which each agent consists of a person and their machine, and a transaction can be guarded by both the machine states and the personal volitions of its participating agents. Previous work presented volition-guarded specifications for the grassroots social graph, social network, and currencies, as well as their implementations via Communicating Volitional Agents. To facilitate the use of AI to derive smartphone-based grassroots platforms from such abstract specifications, Grassroots Logic Programs (GLP), a high-level, typed, multiagent, concurrent programming language, was designed and  implemented. So far, AI derived GLP implementations of these platforms from (volition-free) multiagent atomic transactions, with UI design and development being an ad-hoc, ``manual'' process.

Here, we define the programming language volition-guarded GLP (vGLP), which extends GLP with volition-guarded clauses, and define its operational semantics as an instance of Communicating Volitional Agents. As the semantics requires the person to will a volition-guarded clause reduction, a correct implementation must elicit the person's volitions: finding out ``what's in the person's head'' is the sole rationale for the UI, which is realised accordingly by standard constructs.

We demonstrate the approach on the grassroots social graph, social network, and currencies: each platform is a vGLP program, generated by AI from volition-guarded multiagent atomic transactions; the implementation of vGLP, also created by AI, then maps its volition-guarded clauses into the user-interface constructs, resulting in a single working app deployed on a physical smartphone.
\end{abstract}

\ccsdesc{Theory of computation~Operational semantics}
\ccsdesc{Software and its engineering~Constraint and logic languages}
\ccsdesc{Software and its engineering~Concurrent programming languages}
\ccsdesc{Software and its engineering~Distributed programming languages}
\ccsdesc{Human-centered computing~Human computer interaction (HCI)}
\ccsdesc{Computer systems organization~Peer-to-peer architectures}

\keywords{Volition Elicitation, Operational Semantics, Concurrent Logic Programming, GLP, Volitional Transactions, Grassroots Platforms, User Interface}

\maketitle


\section{Introduction}\label{sec:introduction}

\mypara{Volition-guarded multiagent atomic transactions} The most prevalent distributed systems today include people and their personal machines (smartphones). In such systems, computations are driven by people's volitions: a payment when a person wishes to pay someone, befriending when two people wish to become friends, etc. Volition-Guarded Multiagent Atomic Transactions~\cite{lewis2026volitional} were proposed as an abstract specification language for such systems.  In them  (\ia) each agent consists of a person and their machine; (\ib) a multiagent atomic transaction can be guarded by both the machine states and the personal volitions of its participating agents.  For example, the befriending of two people is guarded by both persons and unfriending by either; a payment is guarded by the payer, a swap by both parties.  An illustrative example is child-safe online befriending~\cite{shapiro2026cssn}:  it is guarded by the volitions of the two children and one parent of each, and by the machines  not recording the two children as friends and recording the two parents as friends.  

Previous work~\cite{lewis2026volitional} presented volition-guarded specifications for the grassroots social graph, grassroots social network~\cite{shapiro2023gsn}, and grassroots currencies~\cite{shapiro2024gc,shapiro2026bonds}, as well as their implementations via Communicating Volitional Agents, an implementation-ready restriction of the general model in which the only non-unary transactions are the binary discovery of one agent by another and the volition-free binary transfer of a message from the sender to its recipient.
Volition-guarded transactions thus condition machine transitions on persons' volitions.  Formally, such a transition can occur only if the will for it is part of the volitional state of the person, namely is in the person's head.  Given today's technologies, the only way for a machine to know what's in a person's head is for the person to express it; this is the fundamental reason for a UI.  
The goal of this paper is to rationalise the design of a UI that can elicit volitions from the person's head, when needed.

Existing programming languages treat the person as a source of input --- a stream of clicks~\cite{elliott1997fran,carlsson1993fudgets,czaplicki2013elm,krishnaswami2011gui}, an event~\cite{plasmeijer2007itasks,steenvoorden2019tophat,harel1987statecharts,berry1992esterel}, the environment~\cite{hyland2000pcf,xia2020itrees,wegner1997interaction} --- so input drives computation, but no step of any semantics awaits a person's will (Section~\ref{sec:related-work}).

\mypara{Communicating Volitional Agents}
Implementing a multiagent atomic transactions-based specification must go through unary transactions, carried out by each agent.  In the case of such volition-guarded transactions, the corresponding implementation-ready model is called  Communicating Volitional Agents (CVA)~\cite{lewis2026volitional}. Mathematically, CVA is but a restriction of the general model: It has unary volition-guarded transactions, in which a machine carries out a transaction --- offering friendship  or accepting a friendship offer, sending a payment --- when, and only when, its person is willing. Unary transactions may also be unguarded, for example accepting a received valid payment. The machine state of an agent includes an inbox and an outbox, and an agent placing in its outbox a message addressed to another agent is a unary transaction.  A message in the outbox of the sender is transferred to the inbox of the addressee by a binary atomic transaction unguarded by volitions. The other CVA binary transaction is discovery, by which one agent gets to know of another.  When realising CVA on a smartphone, the basic requirement from the smartphone UI is one: to elicit, from the person operating the smartphone, the volitions the transactions require.

\mypara{Grassroots platforms}
Abstractly, a distributed system is \emph{grassroots}~\cite{shapiro2023grassrootsBA,shapiro2025atomic,shapiro2025characterising} if it can have multiple instances that (1) can operate independently of each other and of any global resource other than the net, and (2) may interoperate and coalesce into ever-larger instances.  Practically, a grassroots platform is a smartphone app with these two properties (while making some concessions in order to communicate through NATs and firewalls).  Grassroots platforms are envisioned as an egalitarian/democratic alternative to the dominant centralised/autocratic (Facebook) and decentralised/plutocratic (Bitcoin) global platforms~\cite{shapiro2022foundations,shapiro2024grassroots}.

\mypara{Example: Volition-guarded atomic child-safe befriending~\cite{shapiro2026cssn}}  Child-safe befriending extends the grassroots social graph~\cite{shapiro2023gsn}, the foundational grassroots platform on which all other grassroots platforms are built. The machine state of each agent $a$ is a set of friends $c_a\subseteq \Pi$; $c'_a$ is the machine state of $a$ after the transaction is carried out; $r$, $s$ are children with respective parents $p$, $q$, where $r,s,p,q$ are four distinct agents with the stated precondition.
\begin{tcolorbox}[colback=gray!5!white,colframe=black!75!black,top=2pt,bottom=2pt]
\begin{enumerate}
    \item \textbf{Child befriend}: $c'_r := c_r \cup \{s\}$, $c'_s := c_s \cup \{r\}$, provided $q \in c_p$ and $s \notin c_r$.  Guarded by $\{r, s, p, q\}$.
    \item \textbf{Child unfriend}: $c'_r := c_r \setminus \{s\}$, $c'_s := c_s \setminus \{r\}$, provided $s \in c_r$.  Guarded by any one of $\{r, s, p, q\}$.
\end{enumerate}
\end{tcolorbox}
\noindent The preconditions $q\in c_p$ and $s\notin c_r$  state that the parents are friends and the children are not.  Child befriending requires all four---both children and both parents---to be willing; child unfriending can be initiated by any one of the four.

\mypara{Grassroots Logic Programs (GLP) and AI} For AI to derive smartphone-based grassroots platforms from such abstract specifications, Grassroots Logic Programs (GLP)~\cite{shapiro2025glp,shapiro2026implementing,shapiro2026types}, a high-level, typed, multiagent, concurrent programming language, was designed and implemented by AI in Dart/Flutter, which is deployable on most smartphones (iPhone/Android).  So far, AI derived GLP implementations of these platforms from (volition-free) multiagent atomic transactions~\cite{shapiro2025atomic}.
However, specifying the role of people in such platforms is out-of-reach for this formalism, and hence the key step of designing and implementing the app's UI was an ad hoc,  ``manual'' process.  

To rectify that, we need formal foundations for the implementation of unary volition-guarded transactions, not only for attaining mathematical elegance and completeness, but, more pragmatically, for AI to methodically develop the volition-eliciting UI for abstractly-specified grassroots platforms.  Moreover, we wish for a single smartphone app, running the GLP engine, to support multiple, dynamically-loaded grassroots platforms written in GLP, and for that we cannot have an ad-hoc UI for each new platform.

\mypara{Volition-guarded GLP} 
In this paper, we aim to achieve all that: We define the programming language volition-guarded GLP (vGLP), which extends GLP with volition-guarded clauses.  A volition guard has the form \verb|*(|$X_1{=}T_1,\ldots,X_i{=}T_i,\,Y_1?,\ldots,Y_j?$\verb|)|, declaring the \emph{question} $X_1,\ldots,X_i$, with $T_1,\ldots,T_i$ constraining the answers for which this clause is applicable, and the \emph{context} $Y_1?,\ldots,Y_j?$, the values of which define the context in which the question is asked. 

Three clauses of the deployed coins-among-friends agent (Section~\ref{sec:platforms}) preview the language.  The first, tail-recursive, lets the person mint their own coins at any time, the person supplying \verb|K|; the second and third are the accept and decline clauses of a \emph{responder} goal, spawned per arriving friend offer, by which the person answers it:

\begin{verbatim}
%% Mint K of my own coins.
*(K)
agent(Id, UserIn, NetIn, Outs, Holdings) :-
    ground(Id?), integer(K?) |
    add_coin(Id?, K?, Holdings?, Holdings1),
    after_change(Id?, Holdings1?, UserIn?, NetIn?, Outs?).

%% Answer a friend offer: yes accepts, no declines,
%% the answer binding Answer.
*(Answer=yes, From?)
respond_coldcall(offer(From), Resp?,
    [decision(Answer?, From?, response(Resp))]) :-
    ground(From?) | true.
*(Answer=no, From?)
respond_coldcall(offer(From), Resp?,
    [decision(Answer?, From?, response(Resp))]) :-
    ground(From?) | true.
\end{verbatim}

The first clause is reducible whenever the person wills a mint, the willed volition supplying \verb|K| --- a question expecting any answer; fulfilled by the reduction, it re-arises with the tail-recursive call, so the interface renders it as a persistent compose form.  The accept and decline clauses are reducible only while the offer awaits its answer, each by its own answer, \verb|yes| or \verb|no|, binding \verb|Answer|, the context reader \verb|From?| identifying the offer --- rendered as one inbox card, its buttons those two answers.

\mypara{Volition-guarded GLP as an instance of Communicating Volitional Agents} We then define the operational semantics of vGLP as an instance of CVA: each agent pairs the person's volitional state with the machine's resolvent; a volition-guarded clause is reducible only if the person wills the reduction; the willed volition is fulfilled once the reduction is taken; and the person's change-volition transitions are their own, carrying no liveness obligation.  We prove that on programs with no volition-guarded clauses the semantics coincides with GLP's (conservativity), and that every volition-guarded reduction fulfils a volition its person expressed (volitional soundness).

Since a volition-guarded reduction requires the person's will, a correct implementation must elicit the person's volitions; this is the reason for the UI. We derive the UI from the semantics: a pending reduction determines the question posed, the answers awaited, and the sibling clauses among which the person chooses. The \emph{manifest} of a program assigns each volition-guarded clause a user-interface construct: its content the context, its fields the question, its buttons the siblings. We prove that the constructs offer exactly the pending volitions (elicitation completeness). A clause posing the empty question yields a compose form, persistent when tail-recursive; a clause posing an arrived offer yields an inbox card, answered by its buttons. The implementation compiles vGLP onto GLP:
the vGLP agent is compiled into a GLP agent and a mediator that escrows the pending reductions, and the UI constructs perform the person's grants. We prove that the compilation correctly implements vcGLP. We also prove that every CVA platform (Section~\ref{sec:cva}) can be realised by a vGLP program.

We demonstrate the approach on the grassroots social graph, grassroots social network, and grassroots currencies: each platform is a vGLP program, generated by AI from volition-guarded multiagent atomic transactions. The implementation of vGLP, also created by AI, then maps its volition-guarded clauses into the user-interface constructs, resulting in a single working app deployed on a physical smartphone (Section~\ref{sec:ui-primitives}).  

\mypara{Paper outline}
Section~\ref{sec:foundations} recalls volition-guarded multiagent atomic transactions, transition systems, implementations, grassroots protocols, and CVA.  Section~\ref{sec:glp-recall} recalls GLP and its multiagent operational semantics maGLP.  Section~\ref{sec:vglp} defines vGLP and its operational semantics vmaGLP, with conservativity, volitional soundness, liveness of simple programs, and realisation.  Section~\ref{sec:elicitation} derives the interaction primitives, proves elicitation completeness, defines the manifest and the compilation of vGLP onto GLP, and proves that the compilation correctly implements vcGLP.  Section~\ref{sec:platforms} presents the three platforms as volition-guarded transactions, communicating volitional agents, and vGLP programs, with the before and after of each compilation.  Section~\ref{sec:ui-primitives} demonstrates the approach.  Section~\ref{sec:related-work} reviews related work, and Section~\ref{sec:conclusion} concludes.
\section{Volition-Guarded Multiagent Atomic Transactions}\label{sec:foundations}

This section recalls the mathematical foundations on which the paper is based~\cite{shapiro2021multiagent,lewis2026volitional}.
Each agent consists of a \emph{person} and a \emph{machine} (e.g., smartphone) operated by the person.  A \emph{machine transaction} specifies what machines do; a \emph{volition guard} specifies which persons must be willing for the transaction to occur.

We assume a potentially infinite set of \emph{agents} $\Pi$, but consider only finite subsets of it, so we assume any particular set of agents $P \subset \Pi$ to be nonempty and finite.  We use $\subset$ to denote the strict subset relation and $\subseteq$ when equality is also possible.  As standard, we use $S^P$ to denote the set of all total functions from $P$ to $S$, and if $c\in S^P$ we use $c_p$ to denote the value of $c$ at $p\in P$.

\subsection{Volition-Guarded Transactions}

\begin{definition}[Machine State, Configuration, Transaction, Volition-Guarded Multiagent Atomic Transaction]\label{def:mt}
Given an arbitrary set $S$ of \temph{machine states}, with a designated \temph{initial state} $s_0 \in S$, and agents $Q \subset \Pi$, a \temph{machine configuration} over $Q$ is a member of $S^Q$, and a \temph{machine transaction} over \temph{participants} $Q$ is a pair $c\rightarrow c' \in (S^Q)^2$ such that $c\ne c'$. Given such a machine transaction $t$, a \temph{volition-guarded multiagent atomic transaction} over $t$ --- henceforth, \temph{volition-guarded transaction} --- is a pair $(t,Q')$ where $Q'\subseteq Q$ are its \temph{guards}.
\end{definition}

Machine transactions are atomic and asynchronous~\cite{shapiro2021multiagent}---they can be carried out by their participants at any time, regardless of the states of non-participants.  Participants include both active agents (whose state changes) and stationary agents (whose state is a precondition but does not change).  Volition-guarded transactions are machine transactions that can be carried out only if their guards $Q'\subseteq Q$ are willing.  When we say a transaction is ``guarded by $\{p,q\}$,'' both must be willing; when we say it is ``guarded by either $p$ or $q$,'' we mean there are two volition-guarded transactions over the same machine transaction, $(t,\{p\})$ and $(t,\{q\})$, so that either person's volition suffices. 
Distinct machine transactions can represent ``the same action'' in different configurations, captured by an equivalence relation on machine transactions.
\begin{definition}[Transaction Equivalence]\label{def:equivalence}
Given a set of machine transactions $R$, a \temph{transaction equivalence} is an equivalence relation $\sim$ on $R$ such that $t \sim t'$ implies $t$ and $t'$ have the same participants.  We write $[t]$ for the equivalence class of $t$ under $\sim$.
\end{definition}

A person's volitional state is a set of equivalence classes of machine transactions they are willing their machine to participate in.

\begin{definition}[Agent State and Configuration]\label{def:agent-state}
Given agents $P$, states $S$ with initial state $s_0$, a set of machine transactions $T$ each over its own participants $Q\subseteq P$ and $S$, and equivalence $\sim$ on $T$, an \temph{agent state} is a pair $(V,m)\in \calA = (2^{T/\sim}  \times S)$ where  $V$ is its \temph{volitional state} and $m \in S$ its  \temph{machine state}.  The \temph{initial agent state} is $(\emptyset,s_0)$.  An \temph{agent configuration} $c$ over $P$, $S$, $T$, and $\sim$ is a member $c\in \calA^P$ in which $c^v_p \subseteq (T/{\sim})_p$ for every $p\in P$, where $(T/{\sim})_p$ denotes the classes in $T/{\sim}$ in which $p$ is a participant; we write $c^v_p$ for the volitional state and $c^m_p$ for the machine state of agent $p$ in $c$.
\end{definition}

\begin{definition}[Enabled]\label{def:enabled}
A volition-guarded transaction $(t,Q')$ with $t=d\rightarrow d'$ over participants $Q$ is \temph{enabled} in an agent configuration $c$ if $c^m_p=d_p$ for every participant $p\in Q$ and $[t]\in c^v_q$ for every guard $q\in Q'$.  An equivalence class is \temph{enabled} when some representative is.
\end{definition}

A volition-guarded transaction with an empty guard ($Q' = \emptyset$) --- an \emph{unguarded} transaction --- requires no volitions and is enabled whenever its machine precondition is met.

\begin{definition}[Volitional Multiagent Atomic Transaction]\label{def:vmat}
Given agents $P$, states $S$, machine transactions $T$ over $P$ and $S$, and equivalence $\sim$ on $T$:
\begin{enumerate}
    \item A \temph{change-volition transaction of agent $p\in P$} is a pair $c\rightarrow c'$ of agent configurations over $\{p\}$, $S$, $T$, and $\sim$ such that $c^v_p \ne c'{^v_p}$ and $c^m_p = c'{^m_p}$.
    \item A \temph{volitional machine transaction} induced by a volition-guarded machine transaction $(t,Q')$, for some $t= (d\rightarrow d')\in T$ over $Q\subseteq P$ and $Q'\subseteq Q$, is a pair $c\rightarrow c'$ where $c\ne c'$ are agent configurations over $P$, $S$, $T$, and $\sim$ such that $(t,Q')$ is enabled in $c$ (Definition~\ref{def:enabled}); $c'{^m_p} = d'_p$ for every $p\in Q$; $c^m_p = c'{^m_p}$ for every $p\in P\setminus Q$; and $c'{^v_p} = c^v_p \setminus \{[t]\}$ for every $p\in P$.
    \item A \temph{volitional multiagent atomic transaction} is a change-volition transaction or a volitional machine transaction.
\end{enumerate}
\end{definition}
When a volitional machine transaction induced by $(t,Q')$ is taken, the class $[t]$ is removed from every agent's volitional state: the will is fulfilled by any equivalent transaction.  A person may independently change their volitional state via change-volition transactions, which may add or remove classes; beyond these, the framework removes a class from $c^v_p$ only upon fulfilment.

\subsection{Transition Systems}

\begin{definition}[Transition System, Computation, Run, Safe, Live, Correct~\cite{shapiro2021multiagent,lewis2026volitional}]
\label{def:ts}
A \temph{transition system} is a tuple $TS = (C, c_0, T, {\sim})$ where $C$ is an arbitrary set of \temph{configurations}, $c_0 \in C$ a designated \temph{initial configuration}, $T \subseteq C \times C$ a set of \temph{transitions}, each a pair $c \rightarrow c'$ of non-identical configurations $c \ne c' \in C$, and ${\sim}$ a partial equivalence relation on $T$ --- an equivalence on a subset $\mathit{dom}({\sim}) \subseteq T$; we write $[t]$ for the class of $t \in \mathit{dom}({\sim})$ under ${\sim}$.

A \temph{computation} is a (nonempty, finite or infinite) sequence of configurations $c_1, c_2, \ldots$; it is a \temph{run} if $c_1 = c_0$, and \temph{safe} if $c_i \rightarrow c_{i+1} \in T$ for every two consecutive configurations. We write $c \xrightarrow{*} c'$ for the existence of a safe computation from $c$ to $c'$ (empty if $c = c'$). A class $[t]$ is \temph{enabled} in $c$ if $c \rightarrow c' \in [t]$ for some $c'$. A run is \temph{live} if no class $[t]$, $t \in \mathit{dom}({\sim})$, is enabled in every configuration of some suffix in which no member of $[t]$ occurs, and \temph{correct} if it is safe and live.
\end{definition}

A partial equivalence, rather than a total one, is used so that some transitions may carry no liveness requirement: only transitions in the domain of ${\sim}$ form classes and thereby incur a liveness obligation, while transitions outside the domain --- belonging to no class --- may occur in a correct run but are never required to.

\begin{definition}[Multiagent Transition System~\cite{shapiro2021multiagent}]\label{def:mts}
Given agents $P \subset \Pi$ and an arbitrary set $S$ of \temph{local states} with a designated \temph{initial local state} $s_0\in S$, a \temph{multiagent transition system} over $P$ and $S$ is a transition system $TS= (C,c_0,T,{\sim})$ with \temph{configurations} $C:= S^P$, \temph{initial configuration} $c_0:= \{s_0\}^P$, \temph{transitions} $T\subseteq C^2$ a set of transactions over $P$ and $S$, and ${\sim}$ a partial equivalence on $T$.
\end{definition}

Rather than specifying a multiagent transition system directly, we specify it via transactions, each over its own participants; the closure operator induces from them transitions over a fixed $P$, in which non-participants are stationary.

\begin{definition}[Transaction Closure~\cite{lewis2026volitional}]\label{def:closure}
Let $P\subset \Pi$, $S$ a set of local states, and $C:=S^P$.  For any transition or transaction $t = c\rightarrow c'$, we write $t_q := c_q\rightarrow c'_q$ and say $p$ is \temph{stationary} in $t$ if $c_p = c'_p$.  For a transaction $t=(c\rightarrow c')$ over $S$ with participants $Q$, the \temph{$P$-closure of $t$}, $t{\uparrow}P$, is the set of transitions over $P$ and $S$ defined by:
$$
t{\uparrow}P := \begin{cases} \{ t' \in C^2  :
\forall q\in Q.(t_q = t'_q) \wedge \forall p\in P\setminus Q.(p\text{ is stationary in }t')\} & \text{if } Q\subseteq P \\
\emptyset & \text{otherwise}
\end{cases}
$$
If $R$ is a set of transactions, each $t\in R$ over some $Q$ and $S$, then the \temph{$P$-closure of $R$} is $R{\uparrow}P := \bigcup_{t\in R} t{\uparrow}P$.  Given a relation ${\sim}$ on $R$, its \temph{$P$-closure} ${\sim}{\uparrow}P$ is the relation on $R{\uparrow}P$ with $\hat t \mathrel{({\sim}{\uparrow}P)} \hat t'$ iff $\hat t \in t{\uparrow}P$ and $\hat t' \in t'{\uparrow}P$ for some $t \sim t'$.
\end{definition}

If distinct transactions in $R$ have disjoint $P$-closures --- as when every participant of every transaction in $R$ changes state --- then each transition in $R{\uparrow}P$ has a unique inducing transaction, ${\sim}{\uparrow}P$ relates two transitions exactly when their inducing transactions are related, and ${\sim}{\uparrow}P$ is a partial equivalence whenever ${\sim}$ is.

\begin{definition}[Transactions-Based Multiagent Transition System, adapted from~\cite{shapiro2021multiagent,lewis2026volitional}]\label{def:tbmts}
Given agents $P \subset \Pi$, local states $S$ with initial local state $s_0\in S$, a set of transactions $R$, each $t\in R$ over some $Q\subseteq P$ and $S$, in which distinct transactions have disjoint $P$-closures, and a partial equivalence ${\sim}$ on $R$, the \temph{transactions-based multiagent transition system} over $P$, $S$, $R$, and ${\sim}$ is the multiagent transition system $TS= (S^P,\{s_0\}^P,R{\uparrow}P,{\sim}{\uparrow}P)$.
\end{definition}

A set of volition-guarded transactions with an equivalence on their underlying machine transactions likewise induces a multiagent transition system, over agent states:

\begin{definition}[Volitional Multiagent Transition System~\cite{lewis2026volitional}]\label{def:vmts}
Given agents $P\subset \Pi$, machine states $S$ with initial state $s_0$, a set $R$ of volition-guarded transactions such that every $(t,Q')\in R$ has the participants of $t$ contained in $P$ and distinct underlying machine transactions have disjoint $P$-closures, and an equivalence $\sim$ on the set $T_R := \{t : (t,Q')\in R\text{ for some }Q'\}$ of underlying machine transactions, the \temph{volitional multiagent transition system induced by $(S,R,{\sim})$ over $P$} is the multiagent transition system $(\calA^P,c_0,T_V,{\sim_V})$ where:
\begin{enumerate}
    \item $\calA := 2^{T_R/\sim} \times S$ is the \temph{agent state space};
    \item $c_0 \in \calA^P$ is the initial agent configuration, with $c_0{^v_p}=\emptyset$ and $c_0{^m_p}=s_0$ for every $p\in P$;
    \item $T_V$ consists of the $P$-closures of the change-volition transactions of each agent $p\in P$, together with the volitional machine transactions over $P$ induced by the members of $R$ (Definition~\ref{def:vmat});
    \item $\sim_V$ is the restriction of ${\sim}{\uparrow}P$ (Definition~\ref{def:closure}) to the volitional machine transactions, relating two of them whenever their inducing machine transactions are $\sim$-equivalent; every change-volition transition lies outside its domain and belongs to no class.
\end{enumerate}
\end{definition}

A class of volitional machine transactions is enabled at a configuration in the sense of Definition~\ref{def:ts} exactly when some volition-guarded transaction inducing it is enabled in the sense of Definition~\ref{def:enabled}; this determines which runs are live and correct.  Unguarded transactions carry the liveness obligation with no volitional prerequisite: once the machine precondition holds, some class representative must eventually be taken.  Change-volition transitions impose no liveness obligation; personal choices remain free.

A \emph{protocol} is a family of multiagent transition systems, one for each finite $P \subset \Pi$, over a common local-states function assigning to each $P$ a set of local states containing the initial state~\cite{shapiro2023grassrootsBA,lewis2026volitional}.  A set of volition-guarded transactions with a transaction equivalence over a local-states function thus defines a protocol: it assigns to each $P$ the volitional multiagent transition system it induces over $P$.

\subsection{Implementations}

An implementation relates a specification transition system to one that implements it, via a mapping of configurations.

\begin{definition}[Implementation~\cite{shapiro2021multiagent,shapiro2026implementing}]
\label{def:implementation}
Given two transition systems $TS = (C, c_0, T, {\sim})$ (the \temph{specification}) and $TS' = (C', c'_0, T', {\sim'})$, an \temph{implementation of $TS$ by $TS'$} is a function $\sigma : C' \rightarrow C$ where $\sigma(c'_0) = c_0$, in which case the pair $(TS', \sigma)$ is referred to as \temph{an implementation of $TS$}.

Given a computation $r' = c'_1 \rightarrow c'_2 \rightarrow \ldots$ of $TS'$, $\sigma(r')$ is the (possibly empty) computation $\sigma(c'_1) \rightarrow \sigma(c'_2) \rightarrow \ldots$ obtained from the sequence by removing consecutively repetitive elements so that $\sigma(r')$ has no \temph{stutter transitions} of the form $c \rightarrow c$.
\end{definition}

\begin{definition}[Correct and Complete Implementation~\cite{shapiro2021multiagent,shapiro2026implementing}]
\label{def:implementation-properties}
The implementation $(TS', \sigma)$ of $TS$ is \temph{correct} if $\sigma$ maps every correct run of $TS'$ to a correct run of $TS$, and \temph{complete} if every correct run $r$ of $TS$ is $\sigma(r')$ for some correct run $r'$ of $TS'$.
\end{definition}

\subsection{Grassroots Protocols}

We are particularly interested in ~\emph{grassroots protocols}~\cite{shapiro2023grassrootsBA,shapiro2025atomic,lewis2026volitional}, defined informally as follows.  Runs of two disjoint sets can be \emph{interleaved} into a run of their union.  A protocol is \emph{oblivious} if every interleaving of two correct runs of disjoint sets of agents is a correct run of the union, and \emph{interactive} if some correct run of the union is not an interleaving of any two such independent runs, because it includes a step whose participants span both sets; it is \emph{grassroots} if both~\cite{lewis2026volitional}, capturing formally the informal idea that grassroots platforms (specified by grassroots protocols) can have multiple instances that can form and operate independently, yet may coalesce into ever-larger instances, possibly (but not necessarily) into a single global platform.
\subsection{Communicating Volitional Agents}\label{sec:cva}

The framework of Section~\ref{sec:foundations} abstracts over \emph{how} a volition-guarded multiagent atomic transaction is carried out: a $k$-ary transaction simultaneously updates the machine states of all its participants, with no notion of how participating machines come to act in concert.  An implementation on networked smartphones has no such simultaneity: every exchange between two agents is a separate asynchronous message-passing event, and a $k$-ary transaction must be realised by a protocol composed of such events.

To achieve that we recall \emph{communicating volitional agents} (CVA)~\cite{lewis2026volitional}, an implementation-ready restriction of volition-guarded transactions.  CVA volition-guarded machine transactions are drawn from four syntactic forms only: binary \emph{discover} transactions by which an agent comes to know of another; binary \emph{communicate} transactions that copy a message from one agent's outbox to another's inbox; a unary \emph{advance-date} transaction by which an agent advances its own local clock; and unary \emph{platform} transactions.  Thus, a CVA platform need only specify its platform transactions.  Moreover a CVA protocol is grassroots by construction~\cite{lewis2026volitional}.

\mypara{Local states} A CVA agent's local state packages five components: a set of \emph{known peers} with whom the agent can communicate, an \emph{outbox} of messages awaiting delivery, an \emph{inbox} of messages received, a \emph{platform state} in which each platform specifies whatever data it requires (friend sets, groups, bonds, feeds, and so on), and a \emph{local date}---a logical clock the agent advances on its own, giving a substrate notion of local time without reference to any global clock.  A \emph{message} is a triple of sender, recipient, and cargo; outboxes and inboxes are sets of messages.  The cargo space $C$ and the platform state space $A$ are parameters of the CVA platform.

\begin{definition}[CVA Local States and Configurations]\label{def:cva-state}
Given a set $C$ of \temph{cargoes} and a \temph{platform state space} $A$ with initial state $a_0\in A$, the set of \temph{messages} over $P\subset \Pi$ is
\[
M(P) \;:=\; \{\mathsf{message}(s,r,c) \mid s\ne r\in P,\; c\in C\},
\]
with $\mathsf{message}(s,r,c)$ a message of \temph{sender} $s$, \temph{recipient} $r$, and \temph{cargo} $c$.  A \temph{CVA local state} over $P$ is a tuple
\[
(\mathit{known},\, o,\, i,\, a,\, t) \;\in\; S(P) \;:=\; 2^{P} \times 2^{M(P)} \times 2^{M(P)} \times A \times \mathbb{N},
\]
where $\mathit{known}\subseteq P$ is a set of \temph{known peers}, $o\subseteq M(P)$ is an \temph{outbox}, $i\subseteq M(P)$ is an \temph{inbox}, $a\in A$ is a \temph{platform state}, and $t\in\mathbb{N}$ is a \temph{local date}; the \temph{initial local state} is $(\emptyset,\emptyset,\emptyset,a_0,0)$.  A \temph{CVA configuration} over $P$ is a member $c\in S(P)^P$; for $p\in P$ we write $c_p = (\mathit{known}_p,\, o_p,\, i_p,\, a_p,\, t_p)$ for the local state of $p$ in $c$.  The \temph{initial configuration} over $P$, denoted $c0(P)$, assigns the initial local state to every $p\in P$.
\end{definition}

The local-states function $S$ satisfies $P\subseteq P' \implies S(P)\subseteq S(P')$, as required of a transactions-based protocol; the added $\mathbb{N}$ factor is independent of $P$ and does not affect this.

\mypara{Volition-guarded transactions} A CVA protocol consists of three ``built in'' transactions---the binary \emph{discover} and \emph{communicate} and the unary \emph{advance-date}---and platform-specific unary transactions, specified below:

\begin{definition}[CVA Protocol]\label{def:cva-transactions}
A \temph{CVA protocol} consists of every binary transaction $c\rightarrow c'$ over $\{p,q\}$ for every $p\ne q\in \Pi$ such that $c'=c$ except that:
\begin{enumerate}
    \item \textbf{Discover}: guarded by $\{p\}$, $\mathit{known}'_p := \mathit{known}_p \cup \{q\}$.

    \item \textbf{Communicate}: unguarded, $i'_q := i_q \cup M$, 
    $o'_p := o_p \setminus  M$, provided $M=\{\mathsf{message}(p,q,c)\} \subseteq o_p$.
\end{enumerate}
together with, for each $p\in \Pi$, the unary built-in transaction:
\begin{enumerate} \setcounter{enumi}{2}
    \item \textbf{Advance-date}: unguarded with participant $\{p\}$, $t'_p := t_p + 1$.
\end{enumerate}
In addition it has for each $p\in \Pi$ volition-guarded unary platform transactions:
\begin{enumerate}\setcounter{enumi}{3}
    \item \textbf{Platform transactions}: unary transactions $c\rightarrow c'$ with participant $\{p\}$ and guard $\emptyset$ or $\{p\}$, with a precondition over the local state of $p$---it may inspect $\mathit{known}_p$, $i_p$, $a_p$, and $t_p$---each modifying $a_p$ to $a'_p$  and/or adding one or more $\mathsf{message}(s,r,c)$ to $o_p$ resulting in $o'_p$, provided $s=p$, $r\in \mathit{known}_p$ and $c\in C$.  A platform transaction does not remove messages from $i_p$.
\end{enumerate}
\end{definition}

Advance-date is unguarded and always enabled.  The local date $t_p$ is CVA's only notion of time; timeouts and periodic behaviour are preconditions on it~\cite{lewis2026volitional}.  Being always enabled, Advance-date is excluded from quiescence~\cite{lewis2026volitional}, which would otherwise be unreachable.
A platform transaction is \emph{reactive} if it is unguarded and its precondition requires an arrived message: the machine carries it out on receipt.  A reactive transaction reads the inbox but does not consume from it; its precondition is written to fail once the transaction has fired, so a retained message cannot enable it twice.  Inboxes only grow; bounding them is left to the implementation.

\begin{definition}[CVA Platform]\label{def:cva-platform}
A \temph{CVA platform} is a tuple $(C, A, a_0, T, {\sim})$: cargoes $C$, a platform state space $A$ with initial state $a_0 \in A$, platform transactions $T$, and a transaction equivalence ${\sim}$ on $T$, extended to the built-in transactions by relating two Discover transactions with the same $p$ and $q$, two Communicate transactions with the same $p$, $q$, and message, and two Advance-date transactions of the same agent.
\end{definition}

Distinct CVA transactions have disjoint $P$-closures: Communicate changes the state of both its participants, every other form of one, and Discover, Advance-date, and the platform transactions change disjoint components of that state.  A CVA platform thus induces over each $P\subset\Pi$ the volitional multiagent transition system of its transactions (Definition~\ref{def:vmts}), and thereby a protocol (Section~\ref{sec:foundations}); its correct runs are those of Definition~\ref{def:ts}, and the correctness statements of Sections~\ref{sec:vglp} and~\ref{sec:elicitation} are over them.
\section{GLP}\label{sec:glp-recall}

This section recalls the GLP language and its multiagent operational semantics maGLP~\cite{shapiro2025glp}; its single-agent case, concurrent GLP (cGLP), is the singleton instance.  vGLP (Section~\ref{sec:vglp}) extends the language with volition-guarded clauses and maGLP with the volitional layer.  While GLP need not be typed, it was augmented with a parameterised moded type system~\cite{shapiro2026types}, which proved indispensable for AI-based programming.

\subsection{GLP Syntax}

GLP extends logic programs by adding a paired \emph{reader} $X?$ to every ``ordinary'' logic variable $X$, now called a \emph{writer}.

\begin{definition}[GLP Variables~\cite{shapiro2025glp}]
\label{def:glp-variables}
Let $\calV$ denote the set of LP variables (identifiers beginning with uppercase), henceforth called \temph{writers}. Define $\calV? = \{X? \mid X \in \calV\}$, called \temph{readers}. The set of all GLP variables is $\hat\calV = \calV \cup \calV?$. A writer $X$ and its reader $X?$ form a \temph{variable pair}.
\end{definition}
GLP terms, unit goals, goals, and clauses are as in LP but defined over the variables in $\hat\calV$.

\begin{definition}[Single-Occurrence (SO) Invariant, GLP Program, Goal~\cite{shapiro2025glp}]
\label{def:so-invariant}
\label{def:glp-program}
A term, goal, or clause satisfies the \temph{single-occurrence (SO) invariant} if every variable occurs in it at most once.
A clause $C$ satisfies the \temph{single-reader/single-writer (SRSW) restriction} if it satisfies SO and a variable occurs in $C$ iff its paired variable also occurs in $C$.
A \temph{GLP program} is a finite sequence of clauses satisfying SRSW; clauses for the same predicate form a \temph{procedure}.
The set of GLP goals $\hat\calG(P)$ includes all goals over $\hat\calV$ and the vocabulary of $P$ that satisfy SO.
\end{definition}

\begin{example}[Fair Merge~\cite{shapiro2025glp}]
\label{ex:merge}
The quintessential concurrent logic program for fairly merging two streams, written in GLP:\footnote{Parameterised moded-type definitions (\texttt{T ::= ...}) and declarations (\texttt{procedure ...}) used informally in examples are formally introduced in~\cite{shapiro2026types}.}
\begin{verbatim}
Stream(X) ::= [] ; [X|Stream(X)].

procedure merge(Stream(X)?, Stream(X)?, Stream(X)).
merge([X|Xs], Ys, [X?|Zs?]) :- merge(Ys?, Xs?, Zs).
merge(Xs, [Y|Ys], [Y?|Zs?]) :- merge(Xs?, Ys?, Zs).
merge([], Ys, Ys?).
merge(Xs, [], Xs?).
\end{verbatim}
The goal \verb=merge([1,2,3|Xs?],[a,b|Ys?],Zs)= satisfies SO; each clause satisfies SRSW. The first clause swaps inputs in the recursive call, ensuring fairness when both streams are available.
\end{example}

\subsection{Goal Reduction}

\begin{definition}[Substitutions and Assignments~\cite{shapiro2025glp}]
\label{def:writers-assignment}
A GLP \temph{writer assignment} is a term of the form $X := T$, $X\in\calV$, $T\notin\calV$, satisfying SO. Similarly, a GLP \temph{reader assignment} is a term of the form $X? := T$, $X?\in\calV?$, $T\notin\calV$, satisfying SO. A \temph{writers (readers) substitution} $\sigma$ is the substitution implied by a set of writer (reader) assignments that jointly satisfy SO. Given a writers assignment $X := T$, its \temph{readers counterpart} is $X? := T$, and given a writers substitution $\sigma$, its \temph{readers counterpart} $\sigma?$ is the readers substitution defined by $X?\sigma? = X\sigma$. Given a reader assignment $X? := T$, its \temph{writers counterpart} is $X := T$, and given a readers substitution $\tau$, its \temph{writers counterpart} $\tau!$ is the writers substitution defined by $X\tau! = X?\tau$. The \temph{pair completion} of a readers substitution $\sigma$ is $\sigma^\star = \sigma \cup \sigma!$, applied to a fixed point.
\end{definition}

\begin{definition}[GLP Renaming, Renaming Apart, Writer MGU~\cite{shapiro2025glp}]
\label{def:glp-renaming}
\label{def:writer-mgu}
A \temph{GLP renaming} is an injective substitution $\rho: \hat\calV \to \hat\calV$ such that for each $X \in \calV$: $X\rho \in \calV$ and $X?\rho = (X\rho)?$. Two GLP terms \temph{have a variable in common} if for some writer $X \in \calV$, either $X$ or $X?$ occurs in both. A GLP renaming $\sigma$ \temph{renames $T'$ apart from} $T$ if $T'\sigma$ and $T$ have no variable in common.
Given two GLP unit goals $A$ and $H$, a \temph{writer mgu} is a writers substitution $\sigma$ such that $A\sigma = H\sigma$ and $\sigma$ is most general among such substitutions.
\end{definition}

\begin{definition}[GLP Goal/Clause Reduction~\cite{shapiro2025glp}]
\label{def:glp-reduction}
Given GLP unit goal $A$ and clause $C$, with $H$ \verb|:-| $B$ being the result of the GLP renaming of $C$ apart from $A$, the \temph{GLP reduction} of $A$ with $C$ \temph{succeeds with result} $(B,\sigma)$ if $A$ and $H$ have a writer mgu.
\end{definition}

\begin{definition}[Guarded Clause~\cite{shapiro2025glp}]
\label{def:guarded-clause}
A \temph{guarded clause} has the form $H$ \verb|:-| $G$ \verb"|" $B$, where $H$ is the head, $G$ is a conjunction of guard predicates, and $B$ is the body. The guard separator ``\verb"|"'' is interpreted logically as a conjunction.  Guard arguments are readers paired to head writers.
\end{definition}

Guards have three-valued semantics~\cite{shapiro2025glp}. Each guard predicate explicitly defines its \emph{success} condition. A guard \emph{suspends} if it does not succeed but some instance of it under a readers substitution would succeed. A guard \emph{fails} if no such instance exists. A guard conjunction succeeds if all members succeed; it suspends if any member suspends and none fail; it fails if any member fails.  Definition~\ref{def:glp-reduction} is augmented to succeed if the guard also succeeds.  Guard occurrences count toward SRSW satisfaction, and a guard whose success implies that $X?$ is ground licenses multiple occurrences of $X$ and $X?$ in the clause~\cite{shapiro2025glp}.

\subsection{maGLP: Multiagent Operational Semantics}

Each agent's initial resolvent is the atomic goal \verb|agent(ch(_?,_),ch(_?,_))|, providing two bidirectional channels: the first to the person operating the machine, the second to the network~\cite{shapiro2025glp}.  Anonymous variables make the initial local states syntactically identical yet semantically unshared; variable pairs whose writer and reader are held by different agents are the communication channels, and Cold-call bootstraps shared variables between previously-disconnected agents through the network streams~\cite{shapiro2025glp}.

\begin{definition}[Goal Identity~\cite{shapiro2025glp}]
\label{def:goal-identity}
Each unit goal in a computation carries a unique \temph{identifier} assigned at spawn time: goals in an initial goal receive distinct initial identifiers, and goals spawned by a Reduce transition receive fresh identifiers. Communicate transitions preserve identifiers: the goal containing $X?$ retains its identity after instantiation.
\end{definition}

\begin{definition}[Multiagent GLP~\cite{shapiro2025glp}]\label{def:maglp}
Given a GLP program $M$, an \temph{asynchronous resolvent} over $M$ is a pair $(G, \sigma)$ where $G \in \hat\calG(M)$ and $\sigma$ is a readers substitution.  Given agents $P\subset \Pi$, the \temph{maGLP transition system} over $P$ and $M$ is the transactions-based multiagent transition system (Definition~\ref{def:tbmts}) over $P$, local states being asynchronous resolvents $(G_p, \sigma_p)$ over $M$, initial local state $(\{\verb|agent(ch(_?,_),ch(_?,_))|\}, \emptyset)$, the following transactions, and the equivalence ${\sim}$ below:
\begin{enumerate}
    \item \textbf{Reduce $p$:} a unary transaction with participant $p$: there exists a unit goal $A \in G_p$ such that $C \in M$ is the first clause for which the GLP reduction of $A$ with $C$ succeeds with result $(B, \hat\sigma)$; $G'_p = (G_p \setminus \{A\} \cup B)\hat\sigma$ and $\sigma'_p = \sigma_p \circ \hat\sigma?$.
    \item \textbf{Communicate $p$ to $q$:} a transaction with participants $p,q \in P$, possibly $p = q$: $\{X? := T\} \in \sigma_p$, $X?$ occurs in $G_q$, $\sigma'_p = \sigma_p \setminus \{X? := T\}$, and $G'_q = G_q\{X? := T\}$.
    \item \textbf{Cold-call $p$ to $q$:} a binary transaction with participants $p\ne q \in P$: the network output stream in $(G_p, \sigma_p)$ has a new message \verb|msg|$(q,X)$, $(G'_p, \sigma'_p)$ is the result of advancing the network output stream, and $(G'_q, \sigma'_q)$ is the result of adding \verb|msg|$(q,X)$ to the network input stream.
    \item ${\sim}$, the \temph{maGLP transaction equivalence}, relates $t_1 \sim t_2$ iff both are Reduce transactions at the same agent $p$ reducing the same goal (by identity, Definition~\ref{def:goal-identity}) with the same clause, both are Communicate transactions from the same $p$ to the same $q$ applying the counterpart of the same writer assignment to the same goal, or both are Cold-call transactions from the same $p$ to the same $q$ delivering the same message.
\end{enumerate}
\end{definition}

Reduce differs from LP in the use of a writer mgu instead of a regular mgu, and in the choice of the first applicable clause instead of any clause; Communicate communicates an assignment from its writer to its paired reader --- within one agent, or across the two agents holding the pair's two sides; Cold-call transfers a term with its variables through the network streams, creating paired variables between previously-disconnected agents~\cite{shapiro2025glp}.  Distinct maGLP transactions have disjoint $P$-closures --- every participant of each changes state --- so ${\sim}{\uparrow}P$ is an equivalence (Definition~\ref{def:closure}).  The transition equivalence is total, so every transition carries a liveness obligation; in vGLP (Section~\ref{sec:vglp}), the person's Change-Volition transitions lie outside the domain of the equivalence.  Implementation-wise, if a GLP goal $A$ cannot be reduced now, but there is a readers substitution under which it can, the goal \emph{suspends} on these readers and is rescheduled once any of them is assigned.

\begin{definition}[Proper Run~\cite{shapiro2025glp}]
\label{def:proper-run}
A maGLP run is \temph{proper} if, at each agent $p$, a variable that occurs in $G_p$ after a transition but not before it also does not occur in $G_p$ at any earlier configuration.
\end{definition}

\begin{proposition}[Monotonicity~\cite{shapiro2025glp}]
\label{prop:glp-monotonicity}
In any proper cGLP run, if unit goal $A$ can reduce with clause $C$ at step $i$, then either an instance of $A$ has been reduced by step $j > i$, or an instance of $A$ can still reduce with $C$ at step $j$.
\end{proposition}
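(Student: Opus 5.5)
The argument is a case analysis on the transitions of a proper cGLP run between steps $i$ and $j$, by induction on $j-i$. The invariant I will maintain is: if no instance of $A$ has been reduced by step $j$, then the configuration at step $j$ contains a goal $A\tau$, with the same identifier as $A$ (Definition~\ref{def:goal-identity}), where $\tau$ is a composition of readers substitutions applied to $A$ since step $i$, and $A\tau$ can reduce with $C$. The base case $j=i$ is the hypothesis. For the step, assume the invariant at step $j$ and take the transition $j\to j+1$. Since the run is single-agent (cGLP), there is no Cold-call, and there are two cases.

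\textbf{Reduce.} If the transition reduces the goal $A\tau$ itself, an instance of $A$ has been reduced and we are done. Otherwise it reduces some other goal $A'$ with result $(B,\hat\sigma)$, where $\hat\sigma$ is a \emph{writers} substitution. I must show that $A\tau\hat\sigma = A\tau$, i.e. that $\hat\sigma$ binds no writer occurring in $A\tau$. By SO on the goal, a writer occurs at most once in the resolvent, so any writer of $A'$ bound by $\hat\sigma$ does not occur in $A\tau$. The remaining domain of $\hat\sigma$ consists of variables of the renamed clause. The renaming makes these fresh relative to the current configuration, and properness (Definition~\ref{def:proper-run}) ensures that no newly introduced variable collides with any earlier variable. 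Hence $A\tau$ is unchanged and still reduces with $C$.

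\textbf{Communicate.} Here some reader assignment $X?:=T$ is applied to the goal containing $X?$. If that goal is not $A\tau$, nothing changes. If it is, the new goal is $A\tau\{X?:=T\}$, an instance of $A$ under a readers substitution, so the invariant's form is kept, and I must show it still reduces with $C$. This is the key lemma, and the main obstacle: \emph{reducibility with a fixed clause is stable under reader instantiation}.

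To prove the lemma, let $\theta$ be the writer mgu of $A\tau$ and the renamed head $H$. Because $\theta$ is a writers substitution, it binds no reader of $A\tau$. In a GLP head, readers correspond to positions where the clause only passes data, and binding a goal reader to $T$ matches a head variable or subterm through unification on head writers. So $\{X?:=T\}$ composed with the extension of $\theta$ that binds the corresponding head writer to $T$ still unifies; I would argue this by induction on the structure of the unification. When the head has a non-variable subterm at the position of $X?$, reduction of $A\tau$ would have suspended rather than succeeded, so this situation cannot occur. For the guard, I would appeal to its three-valued semantics: a guard that succeeds remains successful under any readers substitution, since success is monotone in instantiation for the guard predicates considered.

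One more point applies to cGLP's choice of the \emph{first} applicable clause. Instantiation might make an earlier clause $C'$ applicable, and then $A\tau$ would reduce with $C'$ rather than $C$. I would read ``can reduce with $C$'' as applicability of $C$, matching the statement. If first-clause selection is required, the disjunction still holds whenever reducibility is defined this way, since the invariant tracks only applicability of $C$.

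Combining the two cases gives the invariant at step $j+1$, and the induction yields the proposition.
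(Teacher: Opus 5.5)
The paper gives no proof of this proposition to compare with. It recalls the result from \cite{shapiro2025glp} and only uses it, in Remark~\ref{rem:question-stands} and in the proofs of Theorems~\ref{thm:simple-liveness} and~\ref{thm:realisation}. Your argument is the natural one, and under the standard freshness convention discussed below it is correct. It rests on three facts:
\begin{itemize}
\item a Reduce of another goal binds only writers, and none of them occurs in $A\tau$;
\item a Communicate can only instantiate a reader of $A\tau$;
\item success of the GLP reduction with a fixed clause survives reader instantiation, because a writer mgu never binds a goal reader and guard success is stable under readers substitutions. For \texttt{otherwise} and negated guards this holds because they succeed on a failure, and failure is stable by definition.
\end{itemize}
You read ``can reduce with $C$'' as applicability of $C$, and that is the reading the paper relies on: the proof of Theorem~\ref{thm:simple-liveness} argues that ``the first succeeding clause moves only earlier''. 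So drop the hedge in your last paragraph. Under the first-applicable-clause reading the proposition is false, since instantiating a reader of $A\tau$ can make an earlier clause applicable.

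Two steps need tightening.

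\textbf{Freshness of the renamed clause.} Properness does not supply this. Definition~\ref{def:glp-reduction} renames apart only from the goal being reduced. Definition~\ref{def:proper-run} only forbids a variable new to $G_p$ from having occurred in $G_p$ earlier. Neither rules out a head writer of the clause coinciding with a writer of $A\tau$ whose reader has never been in $G_p$. For example, with $q(V)$ and $r(a)$ in the resolvent, reducing $r(a)$ with \texttt{r(V) :- s(V?)}, renamed by the identity, turns $q(V)$ into $q(a)$. That goal no longer reduces with \texttt{q(b)}. State the standard convention, renaming apart from the whole resolvent, explicitly as the assumption you use.

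\textbf{The key lemma.} It needs no induction on unification. Let $\theta$ be the writer mgu of $A\tau$ and $H$, with $H$ also renamed apart from $T$, and put $\theta' := \{W := R\{X?:=T\} \mid (W := R)\in\theta\}$. The variables of $T$ occur neither in $H$ nor, by SO of the goal after the Communicate, elsewhere in $A\tau$, and $\theta$ does not bind $X?$. Hence $A\tau\{X?:=T\}\theta' = A\tau\theta\{X?:=T\} = H\theta\{X?:=T\} = H\theta'$. Each binding remains a writer assignment because $T\notin\calV$. The guard under $\theta'?$ is an instance of the guard under $\theta?$ by $\{X?:=T\}$, so guard monotonicity applies. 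This construction also covers the case where $X?$ lies strictly inside the subterm bound to a head writer; there the new binding is $R\{X?:=T\}$, not $T$ as your sketch says.
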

\section{Volition-Guarded GLP}\label{sec:vglp}

Here we define volition-guarded GLP (vGLP).  Syntactically, it extends GLP clauses with an optional volition guard.  Semantically, it assumes volitional agents and thus we define its operational semantics as an instance of CVA. 
\subsection{Syntax}

\begin{definition}[Volition-Guarded Clause, Volition Guard, Question, Answer, Context, Ordinary Clause, vGLP Program]\label{def:vglp-program}
A \temph{volition-guarded clause} is a GLP clause $C$ preceded by a \temph{volition guard} of the form:
\(
\verb|*|(X_1{=}T_1,\ldots,X_i{=}T_i,\,Y_1?,\ldots,Y_j?),\ i,j \ge 0,
\)
where each $X_l$ is a writer whose paired reader $X_l?$ occurs in $C$ and each $T_l$ is either a ground term or an anonymous variable, in which case $X_l{=}$\verb|_| can be abbreviated as $X_l$, $1\le l \le i$.  Each $Y_l?$ is a reader for which the guard \verb|ground(|$Y_l?$\verb|)| occurs in $C$, $1\le l \le j$. 
The writers of the volition guard are referred to as $C$'s \temph{question}, the readers as its \temph{context}; a substitution $\theta$ over the writers for which $X_l\theta$ is an instance of $T_l$, $1\le l \le i$, as an \temph{answer}.  The \temph{ordinary clause} corresponding to $C$ is the result of removing the volition guard from $C$, which is \temph{ordinary} if its volition guard is empty to begin with.
A \temph{vGLP program} is a sequence of volition-guarded clauses.
\end{definition}

Here are vGLP code fragments used in this paper:
\begin{verbatim}
%% Pay Amt of Coin to a friend (person inputs).
*(Friend, Coin, Amt)
agent(Id, UserIn, NetIn, Outs, Holdings) :-
    ground(Friend?), ground(Coin?), integer(Amt?) |
    sub_coin(Coin?, Amt?, Holdings?, Status, Holdings1),
    do_pay(Status?, Id?, Friend?, Coin?, Amt?, Holdings1?,
           UserIn?, NetIn?, Outs?).

%% A peer cold-calls me: spawn a responder -- the offer
%% pends until the person answers; the answer is merged
%% into UserIn.
agent(Id, UserIn, [msg(Id1, intro(From, Resp?))|NetIn],
      Outs, Holdings) :-
    Id? =?= Id1? |
    respond_coldcall(offer(From?), Resp, Ans),
    merge(Ans?, UserIn?, UserIn1),
    agent(Id?, UserIn1?, NetIn?, Outs?, Holdings?).

%% Answer a friend offer: yes accepts (the sibling clause,
%% *(Answer=no, From?), declines).
*(Answer=yes, From?)
respond_coldcall(offer(From), Resp?,
    [decision(Answer?, From?, response(Resp))]) :-
    ground(From?) | true.
\end{verbatim}

\subsection{Operational Semantics}
Here we define vmaGLP, the operational semantics of vGLP, which both extends maGLP (Section~\ref{sec:glp-recall}), the multiagent operational semantics of GLP, and is an instance of CVA (Section~\ref{sec:cva}).

\begin{definition}[Volition]\label{def:volition}
Given a vGLP program $M$, a \temph{volition} over $M$ is a triple $(C, \theta, \theta')$: a volition-guarded clause $C \in M$, an answer $\theta$ for $C$, and a ground substitution $\theta'$ over the context of $C$.
\end{definition}

\begin{definition}[Volition-Guarded Reduction, Fulfilled Volition]\label{def:vglp-status}
Given a unit goal $A$, a volition-guarded clause $C$, and an answer $\theta$ for $C$, the outcome of the \temph{volition-guarded reduction} of $A$ with $C$ under $\theta$ is the outcome of the GLP reduction of $A$ with $C'\theta?$, where $C'$ is the ordinary clause corresponding to $C$ and $\theta?$ the readers counterpart of $\theta$.
If the reduction succeeds with mgu $\sigma$, its \temph{fulfilled volition} is $(C, \theta, \theta')$, with $\theta'$ being $\sigma?$ restricted to the context of $C$.
\end{definition}

\begin{definition}[vmaGLP Transition System]\label{def:vglp-ts}
Given agents $P \subset \Pi$ and a vGLP program $M$, the \temph{vmaGLP transition system} over $P$ and $M$ extends the maGLP transition system over $P$ and $M$ (Definition~\ref{def:maglp}) as follows:
\begin{enumerate}
    \item local states are \temph{agent states} $(V_p, (G_p, \sigma_p))$: a \temph{volitional state} $V_p$, a finite set of volitions over $M$, and an asynchronous resolvent $(G_p, \sigma_p)$ over $M$; the initial local state is $(\emptyset, (\{\verb|agent(ch(_?,_),ch(_?,_))|\}, \emptyset))$;
    \item \textbf{Reduce $p$} is as in maGLP, where the reduction of $A$ with a volition-guarded clause $C$ succeeds if, for some answer $\theta$, the volition-guarded reduction of $A$ with $C$ under $\theta$ succeeds (Definition~\ref{def:vglp-status}) with fulfilled volition $v \in V_p$; $V'_p = V_p \setminus \{v\}$ if the selected clause is volition-guarded, $V'_p = V_p$ otherwise;
    \item \textbf{Communicate $p$ to $q$} and \textbf{Cold-call $p$ to $q$} are as in maGLP; volitional states unchanged;
    \item \textbf{Change-Volition $p$}: a unary transaction with participant $p$ in which $V'_p \ne V_p$ is a finite set of volitions over $M$ and $(G_p, \sigma_p)$ is unchanged;
    \item ${\sim}$ is the partial equivalence whose domain is the Reduce, Communicate, and Cold-call transactions, on which it is the maGLP transaction equivalence.
\end{enumerate}
\end{definition}

Reduce alone reads the volitional state and, when the selected clause is volition-guarded, fulfils a volition.  Definition~\ref{def:proper-run} applies to vmaGLP runs verbatim, Change-Volition leaving the resolvent unchanged.

Change-Volition transactions are the person's own: the program reads $V_p$ only through volition-guarded reductions, and the person performs every change to it.  Lying outside the domain of the equivalence, they carry no liveness obligation: the machine cannot compel the person.  A volition is fulfilled by the reduction it authorises and is retractable by Change-Volition until fulfilled, as in volitional multiagent atomic transactions (Section~\ref{sec:foundations}).  The willed volition names its clause, so sibling volition-guarded clauses of the same goal --- accept and decline --- are the person's choice: the unwilled sibling suspends, blocking a later \verb|otherwise|, while a later ordinary clause that succeeds may fire.

\begin{definition}[vcGLP]\label{def:vcglp}
Given a vGLP program $M$, \temph{vcGLP$(M)$} is the vmaGLP transition system over a singleton set of agents and $M$.
\end{definition}

Cold-call requires two agents, so vcGLP has Reduce, Communicate, and Change-Volition only; it adds the volitional layer to the single-agent case of maGLP, and is the specification the compilation of Section~\ref{sec:elicitation} implements.

\begin{definition}[Pending Volition, Pending]\label{def:pending}
Given an agent state $(V_p, (G_p, \sigma_p))$, a unit goal $A \in G_p$, and a volition-guarded clause $C$, the \temph{pending volitions} of $C$ on $A$ are the fulfilled volitions of the volition-guarded reductions of $A$ with $C$ that are not in $V_p$; $C$ is \temph{pending on} $A$ if it has one, and a volition is \temph{pending at} $p$ if it is a pending volition of some clause on some unit goal in $G_p$.
\end{definition}

A pending clause is a question the machine puts to its person: \emph{do you will this reduction?}  The person answers by a Change-Volition adding a pending volition --- its $\theta$ the person inputs they supply.  How questions reach the screen and answers return is the interface's task (Section~\ref{sec:elicitation}).

\begin{remark}[The Question Stands]\label{rem:question-stands}
By Monotonicity (Proposition~\ref{prop:glp-monotonicity}), in a proper run, once the volition-guarded reduction of $A$ with $C$ under $\theta$ succeeds, it succeeds for every future instance of $A$ until one is reduced, with the same fulfilled volition; and a goal reading the same values --- the recursive agent goal --- yields the same fulfilled volition.  A volition therefore covers reductions of present and future goals alike, and a pending question, once posed, stands until answered or posed by no goal.
\end{remark}

\subsection{Conservativity, Volitional Soundness, and Liveness}

\begin{proposition}[Conservativity]\label{thm:conservativity}
Let $M$ be a vGLP program with no volition-guarded clauses and $P \subset \Pi$.  Identifying $(\emptyset, (G_p, \sigma_p))$ with $(G_p, \sigma_p)$, the vmaGLP transition system over $P$ and $M$ is the maGLP transition system over $P$ and $M$ (Definition~\ref{def:maglp}).
\end{proposition}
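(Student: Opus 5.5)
The plan is to exhibit the identification $\iota:(\emptyset,(G_p,\sigma_p))\mapsto(G_p,\sigma_p)$ as a bijection between the local-state spaces, and then to show that it carries each component of the vmaGLP transition system over $P$ and $M$ (Definition~\ref{def:vglp-ts}) exactly onto the corresponding component of the maGLP transition system (Definition~\ref{def:maglp}). The components are configurations, initial configuration, transitions, and equivalence. Since both systems are transactions-based multiagent transition systems (Definition~\ref{def:tbmts}), it suffices to compare the unary and binary transactions and the equivalence on them. $P$-closure (Definition~\ref{def:closure}) commutes with a pointwise bijection of local states, so equality of transactions yields equality of transitions and of ${\sim}{\uparrow}P$.

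\textbf{Step 1: volitional states are trivial.} A volition over $M$ (Definition~\ref{def:volition}) is a triple whose first component is a volition-guarded clause of $M$. By hypothesis there is none, so the set of volitions over $M$ is empty. Every volitional state $V_p$, being a finite set of volitions, is therefore $\emptyset$, and $\iota$ is indeed a bijection of local states. The initial states $(\emptyset,(\{\verb|agent(ch(_?,_),ch(_?,_))|\},\emptyset))$ and $(\{\verb|agent(ch(_?,_),ch(_?,_))|\},\emptyset)$ correspond under $\iota$.

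\textbf{Step 2: transactions coincide.} First, Change-Volition $p$ requires $V'_p\ne V_p$ with both sets of volitions over $M$. Both must be $\emptyset$, so there is no such transaction. Second, for Reduce $p$, every clause of $M$ is ordinary, so clause 2 of Definition~\ref{def:vglp-ts} never invokes a volition-guarded reduction. The first-applicable-clause condition is then literally the maGLP one, the resolvent update is the same, and $V'_p=V_p=\emptyset$. If one reads ``no volition-guarded clauses'' as admitting an empty guard \verb|*()|, the same holds: the only answer $\theta$ is empty, the ordinary clause $C'$ equals $C$ with $C'\theta?=C$, and the fulfilled volition would have to lie in $V_p=\emptyset$. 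I would exclude this case by the convention, in Definition~\ref{def:vglp-program}, that such clauses are ordinary. Third, Communicate and Cold-call are defined as in maGLP and leave volitional states unchanged, so they correspond one-to-one under $\iota$.

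\textbf{Step 3: equivalence.} The domain of the vmaGLP ${\sim}$ is the Reduce, Communicate and Cold-call transactions, which by Step 2 are all the transactions. On them, ${\sim}$ is by definition the maGLP transaction equivalence, so it is total and coincides with the maGLP one. Goal identities (Definition~\ref{def:goal-identity}) are unaffected, since they depend only on the resolvent component. The only step needing care, and the one I expect to be the main obstacle, is Step 1's vacuity of Change-Volition. It is what makes the partial equivalence total and prevents extra transitions outside the maGLP system. It must be argued from the emptiness of the volition space itself, not merely from the empty initial volitional state.
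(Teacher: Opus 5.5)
Your proof is correct and follows the paper's argument. No volition-guarded clauses means there are no volitions over $M$, so every $V_p=\emptyset$ and Change-Volition has no instances; Reduce, Communicate, and Cold-call are then maGLP's, and ${\sim}$ becomes the total maGLP equivalence. One correction to your aside on \texttt{*()}: were such a clause treated as volition-guarded, volitions $(C,\emptyset,\emptyset)$ over $M$ would exist, Change-Volition would have instances, and $C$'s reductions would await the person's will, so the same would \emph{not} hold; what rules this case out is the convention of Definition~\ref{def:vglp-program} that an empty volition guard makes the clause ordinary, as you end up saying.
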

\begin{proof}
A volition over $M$ names a volition-guarded clause of $M$ (Definition~\ref{def:volition}), so $M$ has no volitions: every volitional state is $\emptyset$ and Change-Volition, requiring $V'_p \ne V_p$, has no instances.  Every clause being ordinary, Reduce, Communicate, and Cold-call are those of maGLP with the volitional component $\emptyset$ unchanged, and ${\sim}$ is the maGLP transaction equivalence on all of them.
\end{proof}

\begin{definition}[Add, Remove, Fulfil]\label{def:add-remove-consume}
Let $c^1 \rightarrow c^2 \rightarrow \cdots$ be a safe run of the vmaGLP transition system over $P$ and $M$, with $(V^i_p, (G^i_p, \sigma^i_p))$ the agent state of $p \in P$ in $c^i$, and let $v$ be a volition over $M$.  The transition $c^i \rightarrow c^{i+1}$ \temph{adds} $v$ at $p$ if $v \in V^{i+1}_p \setminus V^i_p$, \temph{removes} $v$ at $p$ if $v \in V^i_p \setminus V^{i+1}_p$, and \temph{fulfils} $v$ at $p$ if it is a Reduce at $p$ with fulfilled volition $v$ (Definition~\ref{def:vglp-ts}).
\end{definition}

\begin{restatable}[Volitional Soundness]{theorem}{restateVolitionalSoundness}\label{thm:volitional-soundness}
Let $c^1 \rightarrow c^2 \rightarrow \cdots$ be a safe run of the vmaGLP transition system over $P$ and $M$, $p \in P$, and $v$ a volition over $M$.
\begin{enumerate}
\item[(\ia)] If $c^i \rightarrow c^{i+1}$ fulfils $v$ at $p$, then for some $j < i$, $c^j \rightarrow c^{j+1}$ is a Change-Volition of $p$ adding $v$, and $v \in V^k_p$ for every $j < k \le i$.
\item[(\ib)] If $c^k \rightarrow c^{k+1}$ and $c^{k'} \rightarrow c^{k'+1}$, $k < k'$, both fulfil $v$ at $p$, then some $c^j \rightarrow c^{j+1}$, $k < j < k'$, is a Change-Volition of $p$ adding $v$.
\end{enumerate}
\end{restatable}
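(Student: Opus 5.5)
The argument is a case analysis on transition kinds, establishing first a frame lemma for the volitional component of an agent state. From Definition~\ref{def:vglp-ts} we record which transitions may change $V_p$. Communicate and Cold-call leave every volitional state unchanged. A Reduce at an agent $q\ne p$ changes only $q$'s local state; a Change-Volition of $q\ne p$ likewise changes only $q$'s state, since it is unary and non-participants are stationary in the $P$-closure (Definition~\ref{def:closure}). A Reduce at $p$ either leaves $V_p$ unchanged (ordinary clause) or sets $V'_p=V_p\setminus\{v'\}$, where $v'\in V_p$ is its fulfilled volition. The frame lemma follows: a transition adds $v$ at $p$ only if it is a Change-Volition of $p$, and it removes $v$ at $p$ only if it is a Change-Volition of $p$ or a Reduce at $p$ fulfilling $v$.

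For (\ia), suppose $c^i\rightarrow c^{i+1}$ fulfils $v$ at $p$. By the definition of Reduce in Definition~\ref{def:vglp-ts}, the fulfilled volition lies in the current volitional state, so $v\in V^i_p$. The initial local state has $V^1_p=\emptyset$. We are working with a run, or, if only safe computations are intended, we add the assumption $c^1=c_0$; without it (\ia) can fail, and this is the one subtlety to flag. Hence $v\notin V^1_p$ and $v\in V^i_p$, so the set $\{k\le i : v\notin V^k_p\}$ is nonempty. Let $j$ be its largest element. Then $j<i$, the transition $c^j\rightarrow c^{j+1}$ adds $v$ at $p$, and $v\in V^k_p$ for every $j<k\le i$ by maximality. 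By the frame lemma, that adding transition is a Change-Volition of $p$.

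For (\ib), the fulfilment at step $k$ gives $v\notin V^{k+1}_p$, since $V^{k+1}_p=V^k_p\setminus\{v\}$. The fulfilment at step $k'$ gives $v\in V^{k'}_p$. Since $k<k'$ we have $k+1\le k'$, and the two facts force $k+1<k'$. Take the least $m$ with $k+1\le m<k'$ such that $v\notin V^m_p$ and $v\in V^{m+1}_p$; such an $m$ exists by discreteness. The transition $c^m\rightarrow c^{m+1}$ adds $v$ at $p$, so by the frame lemma it is a Change-Volition of $p$, with $k<m<k'$. Part (\ib) needs no run assumption.

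The main obstacle is to justify the frame lemma cleanly from the closure construction, in particular that a Reduce selecting a volition-guarded clause removes exactly its fulfilled volition and never adds one. Once that is settled, both parts reduce to locating the last or first index at which membership of $v$ flips.
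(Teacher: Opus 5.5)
Your proof is correct and follows essentially the same route as the paper's. Both first observe that only a Change-Volition of $p$ can add to $V_p$. Part (i) then takes the last index $j<i$ with $v\notin V^j_p$, using $V^1_p=\emptyset$, which already follows from "run" meaning $c^1=c_0$, so your hedge is unnecessary. Part (ii) takes the flip between $v\notin V^{k+1}_p$ and $v\in V^{k'}_p$.
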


The proof is in Appendix~C of the supplementary material.

A willed reduction is not, in general, eventually taken: the first clause whose reduction succeeds is selected, so textually earlier clauses can preempt a willed clause indefinitely.  In the following syntactic class, only preceding unit clauses and willed siblings can.

\begin{definition}[Simple Procedure, Simple Program]\label{def:simple}
A vGLP procedure is \temph{simple} if no two of its volition-guarded clauses have a common answer and every clause preceding a volition-guarded clause is a unit clause or volition-guarded.  A vGLP program is \temph{simple} if every procedure of it is.
\end{definition}

\begin{restatable}[Liveness of Simple Programs]{theorem}{restateSimpleLiveness}\label{thm:simple-liveness}
Let $M$ be a simple vGLP program, $c^1 \rightarrow c^2 \rightarrow \cdots$ a proper correct run of the vmaGLP transition system over $P$ and $M$, $p \in P$, $A$ a unit goal in $G^i_p$, and $(C, \theta, \theta')$ a volition over $M$.  If, for every $k \ge i$ at which an instance of $A$ is in $G^k_p$, $(C, \theta, \theta') \in V^k_p$, no volition of a clause preceding $C$ is in $V^k_p$, and the volition-guarded reduction of that instance with $C$ under $\theta$ succeeds with fulfilled volition $(C, \theta, \theta')$, then for some $k \ge i$ the transition $c^k \rightarrow c^{k+1}$ reduces an instance of $A$, with $C$ or with a unit clause preceding $C$.
\end{restatable}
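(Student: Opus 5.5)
The proof goes by contradiction, using the liveness clause of correctness (Definition~\ref{def:ts}). Suppose no transition $c^k \rightarrow c^{k+1}$ with $k \ge i$ reduces an instance of $A$ with $C$ or with a unit clause preceding $C$.

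\textbf{Step 1: an instance of $A$ persists.} First we show that, from some point on, an instance of $A$ is in $G^k_p$ at every $k \ge i$. By Goal Identity (Definition~\ref{def:goal-identity}), the goal identified with $A$ leaves $G_p$ only by a Reduce of it. Communicate and Cold-call only instantiate it, and Change-Volition leaves the resolvent unchanged. So either the goal stays forever, possibly as successive instances, or some Reduce at $p$ reduces an instance of $A$ with some clause $D$.

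\textbf{Step 2: ruling out every other clause $D$.} Suppose such a Reduce occurs at step $k$. By hypothesis the volition-guarded reduction of that instance with $C$ under $\theta$ succeeds with fulfilled volition $(C,\theta,\theta') \in V^k_p$. Hence $C$ itself is applicable at $k$ in the sense of Definition~\ref{def:vglp-ts}. Since Reduce selects the first applicable clause, $D$ precedes $C$ or is $C$; the case $D = C$ is excluded by assumption. By simplicity (Definition~\ref{def:simple}), a clause preceding $C$ is either a unit clause or volition-guarded.
\begin{itemize}
\item If $D$ is a unit clause, this contradicts the assumption.
\item If $D$ is volition-guarded, its applicability requires a fulfilled volition of $D$ in $V^k_p$. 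This contradicts the hypothesis that no volition of a clause preceding $C$ is in $V^k_p$.
\end{itemize}
Hence no Reduce of an instance of $A$ ever occurs, and an instance of $A$ is present at every $k \ge i$.

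\textbf{Step 3: a class enabled forever but never taken.} The hypothesis now holds at every $k \ge i$, so at every such $k$ the Reduce of the current instance of $A$ with $C$ is enabled. We need a single equivalence class that is enabled throughout the suffix and never taken. Take the class of Reduce transactions at $p$ reducing goal $A$ (by identity) with clause $C$. This class is in the domain of ${\sim}$ (Definition~\ref{def:vglp-ts}). Because ${\sim}$ is defined by goal identity and clause, not by the instance, it is the same class at every step. Strictly, the first applicable clause might be a clause preceding $C$ rather than $C$ itself. Step 2 shows, however, that no such clause is applicable. Preceding unit clauses might be applicable, but their firing is also excluded by assumption. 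So at every $k \ge i$ the enabled Reduce step at $A$ selects $C$, and the class is enabled at every configuration of the suffix. No member of the class occurs, since that would reduce $A$ with $C$. This contradicts liveness of the correct run.

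The main obstacle I expect is the subtlety in Step 3. Enabledness of the class requires that the first clause whose reduction succeeds be $C$. Here textually earlier ordinary unit clauses are not excluded from being applicable, only from firing. If a preceding unit clause is applicable at some step, the transition enabled is in the class for that unit clause instead. I would handle this by a case split.
\begin{itemize}
\item If from some point on no preceding unit clause is applicable, the argument above applies with the class for $C$.
\item Otherwise, preceding unit clauses are applicable infinitely often, but not necessarily continuously. Here I would use Monotonicity (Proposition~\ref{prop:glp-monotonicity}), whose standing requirement is that the run be proper. Once a unit clause $U$ can reduce $A$, it remains able to until $A$ is reduced. The first-clause selection at each step is then the earliest such unit clause. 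By monotonicity this selection stabilizes on a fixed earliest unit clause $U$, which is then enabled forever and never taken. This again contradicts liveness.
\end{itemize}
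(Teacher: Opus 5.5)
Your proof is correct and essentially the paper's argument. Simplicity and the volition hypothesis confine the first succeeding clause to $C$ or a unit clause preceding it; Monotonicity makes that clause move only earlier, so it stabilises on a fixed $D$; and liveness then forces the Reduce class of $A$ (by identity) with $D$ to be taken. The paper says this directly, that the first succeeding clause moves only earlier and is therefore eventually fixed, so your two-case split is unnecessary: by Monotonicity, a preceding unit clause that is ever applicable stays applicable for as long as $A$ is unreduced.
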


The proof is in Appendix~C of the supplementary material.

\subsection{Realisation of Volition-Guarded Specifications}

A volition-guarded specification (Section~\ref{sec:foundations}) is decomposed by CVA (Section~\ref{sec:cva}) into its built-in transactions --- Discover, Communicate, Advance-date --- and unary platform transactions, each guarded by $\emptyset$ or by the acting agent's person.  Each maps to a clause of the agent's vGLP program: a transaction with guard $\emptyset$ to an ordinary clause; one with guard $\{p\}$ to a volition-guarded clause --- its parameters the person inputs; its context readers none when its precondition reads the local state, and the arrived message's variables when it requires one; and Communicate to the transfer of a message from its sender to its recipient.  The correctness statement uses the implementation framework of Section~\ref{sec:foundations}.

\begin{definition}[Correct, Realised]\label{def:correct-realised}
An implementation $\sigma$ (Definition~\ref{def:implementation}) by the vmaGLP transition system over $P$ and a vGLP program $M$ is \temph{correct} if it maps every proper correct run onto a correct run.  Given a CVA platform as the specification, a volition-guarded platform transaction $t$ is \temph{realised by} a volition-guarded clause $C \in M$ if, for every proper correct run $\rho$, the occurrences of $[t]$ in $\sigma(\rho)$ are exactly the images of the volition-guarded reductions with $C$.
\end{definition}

\begin{restatable}[Realisation]{theorem}{restateRealisation}\label{thm:realisation}
Every CVA platform has, for every $P \subset \Pi$, a correct implementation by the vmaGLP transition system over $P$ and a simple vGLP program, every volition-guarded platform transaction realised by a volition-guarded clause.
\end{restatable}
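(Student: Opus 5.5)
The plan is a constructive proof: given a CVA platform $(C,A,a_0,T,{\sim})$ and $P$, write a generic simple vGLP program $M$ and define $\sigma$ from vmaGLP configurations over $P$ to CVA agent configurations. Then show that $\sigma$ maps the initial configuration to $c0(P)$ and maps proper correct runs onto correct runs.

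\textbf{Program and mapping.} Each agent runs a tail-recursive goal \verb|agent(Id, UserIn, NetIn, Outs, State)|. Its arguments carry the known peers, the outbox as a stream of messages still to be sent, the inbox as the accumulated set of received messages, the platform state $a$, and the date $t$.

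\begin{itemize}
\item For each unguarded platform transaction whose precondition is decidable on the local state, there is an ordinary clause whose guard tests the precondition. Its body updates $a$ and appends the new messages to the outbox stream.
\item For each platform transaction guarded by $\{p\}$, there is a volition-guarded clause. Its question is the transaction's parameters, and its context is empty, or the readers of the arrived message when the precondition requires one. Distinct transactions get distinct answers, or distinct clauses, so no two guarded clauses share an answer; this gives simplicity (Definition~\ref{def:simple}).
\item Discover is a volition-guarded clause \verb|*(Q)| that adds $Q$ to the known peers.
\item Advance-date is an ordinary clause that increments $t$.
\item Communicate is realised by Cold-call, together with a clause that consumes the network input stream into the inbox component.
\end{itemize}

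To keep the program simple, all volition-guarded clauses are placed first in the procedure, preceded only by unit clauses (there are none), and the ordinary clauses follow them. The mapping $\sigma$ reads the five CVA components off the arguments of each agent's \verb|agent| goal, after applying the pending readers substitution. It counts an outbox message as delivered exactly when its Cold-call has fired, and it reads volitional states through $V_p \mapsto \{[t] : t \text{ fulfils some } v\in V_p\}$. Intermediate auxiliary goals, such as \verb|add_coin|, map by stuttering: $\sigma$ takes the state recorded at the last \verb|agent| goal.

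\textbf{Safety.} Check by cases on transition type that each vmaGLP transition maps either to a stutter or to a single CVA transition.

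\begin{itemize}
\item Reduce with a platform clause maps to that platform transaction.
\item Reduce with a guarded clause maps to a transaction with guard $\{p\}$ that is enabled in the image. This uses Volitional Soundness, Theorem~\ref{thm:volitional-soundness}(\ia): the volition was added by a Change-Volition and is still present, so its class is in $c^v_p$.
\item Change-Volition maps to a CVA change-volition, or to a stutter.
\item Cold-call maps to Communicate.
\item Communicate and auxiliary reductions map to stutters.
\end{itemize}

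The same case analysis gives realisation: the occurrences of $[t]$ in $\sigma(\rho)$ are exactly the images of reductions with the corresponding clause.

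\textbf{Liveness.} This is the main obstacle. Suppose a class $[t]$ is enabled in every configuration of a suffix of $\sigma(\rho)$ and never taken.

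\begin{itemize}
\item If $t$ is guarded, its volition is continuously present and no earlier-clause volition blocks it. Simplicity together with Theorem~\ref{thm:simple-liveness} then forces the \verb|agent| goal to reduce with the corresponding clause or a preceding unit clause, which contradicts the assumption that $[t]$ is never taken.
\item If $t$ is unguarded (platform, Advance-date, or Communicate), the corresponding Reduce or Cold-call class is enabled throughout the suffix, because of Monotonicity (Proposition~\ref{prop:glp-monotonicity}). Liveness of $\rho$ then forces it to occur.
\end{itemize}

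The delicate points are two. First, the textual order may let a continuously enabled earlier ordinary clause preempt a later unguarded one forever. I would handle this with a fair round-robin dispatcher: the recursive call rotates which ordinary clause is tried, for instance via a counter argument, so each ordinary action gets a dedicated goal. Second, a person could keep a volition pending while the agent goal keeps being replaced by a new instance; Remark~\ref{rem:question-stands} is needed here, to transfer the volition across the successive instances of the recursive goal.
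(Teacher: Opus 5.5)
\textbf{The gap is in liveness, and clause order alone cannot close it.} Everything in your construction runs in one tail-recursive \texttt{agent} goal, which reduces with its first applicable clause. maGLP's liveness classes are indexed by goal identity and clause, and each tail-recursive reduction spawns a goal with a fresh identity. So a correct vmaGLP run owes no fairness among the clauses of \texttt{agent}.

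Take a platform with Mint, with Mint's volition-guarded clause first. Let the person re-add a Mint volition after each fulfilment; Change-Volition carries no liveness obligation and may recur forever. Every agent instance is then reduced by Mint, no Reduce class of a later clause is ever continuously enabled, and the run is correct. Its image, however, never takes Advance-date, which is always enabled. It also never takes Pay, if the person holds a Pay volition throughout. So the image is not live.

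Your step ``no earlier-clause volition blocks it'' is exactly what cannot be derived. Enabledness of $[t]$ in the image constrains only $t$'s own volition, so the hypothesis of Theorem~\ref{thm:simple-liveness}, that no volition of a clause preceding $C$ is in $V^k_p$, is not available. Your counter rotates only the ordinary clauses, so it leaves the ungated guarded clauses preempting them. Remark~\ref{rem:question-stands} does not help here.

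There are two smaller problems.
\begin{itemize}
\item Same-shaped request questions in one procedure, such as \texttt{*(Target)} for Offer friend and End friend, have common answers. Without ground tags the procedure is not simple in the sense of Definition~\ref{def:simple}.
\item Reading the state ``at the last \texttt{agent} goal'' removes $[t]$ at the Reduce, but shows the state change only after the auxiliary goals finish. The image then performs a guarded transaction whose class is no longer in $c^v_p$.
\end{itemize}

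\textbf{The missing idea is to get fairness from goal identity rather than from clause order.} Your remark ``so each ordinary action gets a dedicated goal'' points the right way. The paper applies it to every transaction, guarded ones included, and moves the state out of goal arguments into a single state server:
\begin{itemize}
\item each state-reading volitional transaction gets its own request procedure, whose only clause is the volition-guarded one;
\item a router spawns a responder goal per arrived message, which also puts the message in a clause head so that its readers can serve as context (a set-valued inbox argument cannot supply this);
\item each unguarded transaction, and Advance-date, gets a perpetual ordinary goal;
\item all of these send \texttt{apply(...)} to the server, which tests \texttt{pre\_t}, performs \texttt{eff\_t}, and discards failed applies.
\end{itemize}

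With this design simplicity is immediate and the preceding-volition hypothesis is vacuous. A request goal persists until its own clause fires, and a responder only steps along the state stream until its precondition holds. Liveness of $\rho$ together with Theorem~\ref{thm:simple-liveness} then yields every continuously enabled class.

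The cost appears in $\sigma$. It must read the \emph{effective state}: the server's state after serving the applies in flight. This attributes each effect to the Reduce that fulfils its volition. A guarded Reduce whose apply fails \texttt{pre\_t} on the effective state is mapped to a change-volition removing its class.
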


The construction --- one simple procedure per platform transaction, so that Theorem~\ref{thm:simple-liveness} carries CVA's liveness obligation --- and the proof are in Appendix~C of the supplementary material.
\section{Elicitation: The Interaction Primitives}\label{sec:elicitation}

Formally, the person's volitions are in $V_p$, which the machine cannot read.  Implementation therefore requires elicitation: offering the person the means to express volitions the machine needs to know in order to fulfil them.  Here we describe how this can be done, methodically.

\subsection{The Volition Elicitation Construct}

\begin{definition}[Manifest]\label{def:manifest}
The \temph{manifest} of a vGLP program $M$ assigns each volition-guarded clause $C \in M$ its user-interface \temph{construct}, presented per unit goal $A$ on which $C$ is pending:
\begin{itemize}
\item \temph{content}: the contexts of the pending volitions of $C$ on $A$;
\item \temph{fields}: the writers $X_l$ of $C$'s volition guard with $T_l$ anonymous (Definition~\ref{def:vglp-program});
\item \temph{buttons}: one per volition-guarded clause $C'$ pending on $A$, labelled by the ground $T_l$ of its volition guard;
\item \temph{tap} of the button of $C'$: the Change-Volition adding the pending volition of $C'$ on $A$ whose answer maps each $X_l$ of $C'$ to $T_l$ if ground and to its field's value otherwise;
\item \temph{cancel}: the Change-Volition removing the added volition.
\end{itemize}
\end{definition}

\begin{definition}[Offered]\label{def:offered}
A volition $v$ over $M$ is \temph{offered at $p$} in an agent state if the tap of a button of a construct presented at $p$ (Definition~\ref{def:manifest}) adds $v$.
\end{definition}

Two extremes of the construct are the interface's stock elements, and we name the clauses after them, informally.  A \emph{request clause} poses the empty question and expects any answer: its construct is all fields and one button --- a \emph{compose form} --- and the volition originates with the person, who supplies the transaction's parameters --- mint, pay, post, offer friendship.  A \emph{respond clause} poses an arrived offer: its construct carries the offer as content and the siblings' answers --- accept and decline --- as buttons --- an \emph{inbox card} --- and the person answers.  In the platforms every volition-guarded clause is one of the two: a clause of the agent procedure, perpetually pending --- its guards range over the person inputs and the groundness of the goal's own arguments, so every quiescent agent state satisfies them, with state dispatch in the body --- or a clause of a responder, spawned per arriving offer (Section~\ref{sec:vglp}), pending exactly while the offer awaits its answer.

\begin{restatable}[Elicitation Completeness]{theorem}{restateCompleteness}\label{thm:completeness}
In every configuration of a safe run of the vmaGLP transition system over $P$ and $M$ and every $p \in P$, a volition over $M$ is offered at $p$ iff it is pending at $p$ (Definition~\ref{def:pending}).
\end{restatable}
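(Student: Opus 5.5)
The proof is a double inclusion, unfolding Definitions~\ref{def:manifest}, \ref{def:offered} and~\ref{def:pending} at a fixed configuration $c^k$ and agent $p$. Safety of the run is used only so that $(V_p,(G_p,\sigma_p))$ is a well-formed agent state: $V_p$ is a finite set of volitions over $M$, and every $A\in G_p$ is a unit goal over $M$. Under these conditions the set of pending volitions at $p$ and the set of constructs presented at $p$ are determined by the same data. The claim then reduces to a per-goal, per-clause bijection between button taps and pending volitions.

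\emph{Offered implies pending.} Suppose $v$ is offered at $p$. Then some construct is presented for a clause $C$ on a unit goal $A\in G_p$ on which $C$ is pending. That construct has a button for some clause $C'$ pending on $A$, and tapping it adds $v$. By the tap clause of Definition~\ref{def:manifest}, $v$ is a pending volition of $C'$ on $A$. Its answer maps each $X_l$ to the ground $T_l$ or to a field value, so it is an answer for $C'$ in the sense of Definition~\ref{def:vglp-program}. Because $v$ is a pending volition of $C'$ on $A\in G_p$, it is pending at $p$.

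\emph{Pending implies offered.} Suppose $v=(C,\theta,\theta')$ is pending at $p$, witnessed by $A\in G_p$. Then $C$ is pending on $A$, so its construct is presented on $A$, and its button set contains the button of $C$ itself. I will choose field values equal to $X_l\theta$ for every $l$ with $T_l$ anonymous. For $l$ with $T_l$ ground, $X_l\theta=T_l$, since an answer must be an instance of a ground term. With these choices the tap produces an answer equal to $\theta$. I then need the tap to add exactly $(C,\theta,\theta')$. Here the context $\theta'$ is not typed by the person; it is determined by the reduction. By Definition~\ref{def:vglp-status}, $\theta'$ is $\sigma?$ restricted to the context readers of the mgu of $A$ with $C'\theta?$. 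I will argue that $\theta'$ is a function of $(A,C,\theta)$: writer mgus are unique up to renaming, and the context readers carry ground guards, so their bindings are ground and renaming-invariant. Hence the pending volition of $C$ on $A$ with answer $\theta$ is unique, it equals $v$, and it is the volition the tap adds. Since $v\notin V_p$, the addition is a genuine Change-Volition ($V'_p\neq V_p$), so the tap is well defined.

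I expect the main obstacle to be this well-definedness point: the tap must denote one specific volition, and every answer must be reachable through the fields. Two sub-issues need care. First, I must show the fulfilled context $\theta'$ is unique for a given $(A,C,\theta)$. This uses that ground guards on the $Y_l?$ make their instantiations independent of the choice of mgu. Second, I must handle answers whose field values have no volition-guarded reduction that succeeds. Such field values yield no pending volition, so the tap is only defined, and only counted as offering, when the resulting answer makes the reduction succeed. I would make this restriction explicit when reading Definition~\ref{def:offered}. With that reading, the two inclusions above establish the theorem.
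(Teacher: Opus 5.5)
Your proof is correct and follows the paper's own argument. Both directions unfold Definitions~\ref{def:manifest}, \ref{def:offered} and~\ref{def:pending}: for pending implies offered, tap the button of $C$ itself on the construct of $C$ at the witnessing goal $A$, with fields set to the values of $\theta$; for the converse, use the fact that a tap adds only a pending volition. Your extra checks are that $\theta'$ is determined by $(A,C,\theta)$, since the ground guards fix the context bindings, and that a tap is defined only when the answer yields a pending volition. These make explicit what the paper leaves implicit in the phrase ``the pending volition of $C$ on $A$ with answer $\theta$''.
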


The proof is in Appendix~C of the supplementary material.

\subsection{Implementation: Elicitation as User-Interface Constructs}

GLP's semantics gives the person channel (Section~\ref{sec:glp-recall}) no interpretation; vGLP supplies it: the volition-guarded clauses are where the person enters the language.  An implementation must therefore map the volition-guarded clauses into concrete user-interface constructs.  Pragmatically, the constructs are rather standard~\cite{plasmeijer2007itasks,steenvoorden2019tophat}:

\begin{itemize}
\item a pending reduction of a respond clause --- an inbox card carrying the offer, its buttons the siblings' answers;
\item a request clause, perpetually pending --- a compose form collecting the person inputs, through its fields or through the selection that opens it;
\item a Change-Volition adding a volition --- the person's \emph{tap}; removing one --- \emph{cancel};
\item the fulfilment of a volition by its reduction --- the card's removal;
\item outcomes and state --- ordinary output streams to the \emph{screen}, the classic stream I/O of concurrent logic programming, a chat list or balance kept live as a stream read.
\end{itemize}

How a construct is laid out is pragmatics; its semantic content is the Change-Volition it performs.

\begin{remark}[Persistence from Tail Recursion]\label{rem:persistence}
A request clause of the platforms is tail-recursive over the agent goal, so its ask re-arises with the recursive goal after every fulfilment (Remark~\ref{rem:question-stands}): its form is \emph{persistent}, remaining on screen and querying the person for their next transaction --- the interface's command-line box.  A respond clause's ask arises with its responder and is fulfilled with it: its card is \emph{transient}, removed on fulfilment.  The grassroots app's chat input (Section~\ref{sec:ui-primitives}) is the persistent form of the tail-recursive send clause.
\end{remark}

\subsection{The Implementation: Compiling vGLP onto GLP}
\label{sec:compilation}

vGLP needs no new machine.  GLP already connects the program to the person: each agent's first channel is to the person operating it; messages flowing toward the person are rendered by the runtime, and the person's responses flow back as GLP messages~\cite{shapiro2025glp,shapiro2026implementing}.  The volitional layer is implemented on that channel by compilation: a vGLP program becomes a GLP program --- the \emph{agent}, its volition-guarded clauses compiled to ordinary clauses receiving commands, and the \emph{mediator}, escrowing pending reductions --- and the person acts through the manifest's constructs, each performing one \emph{grant}.  We first make the person's side formal.

\begin{definition}[Person Channel, Person Writer, GLP with a Person, Grant]\label{def:person-ts}
Let $N$ be a GLP program and $G_0$ a goal satisfying SO.  A \temph{person channel} of $G_0$ is a variable pair $(U, U?)$ such that $U?$ occurs in $G_0$ and $U$ does not; $U$ is its \temph{person writer}.  \temph{GLP with a person} over $N$, $G_0$, and a designated person channel is the transition system obtained from the maGLP transition system over a singleton set of agents and $N$ (Definition~\ref{def:maglp}), with initial resolvent $(G_0, \emptyset)$, by adding \temph{grant} transitions
\[
(G, \sigma) \;\rightarrow\; (G,\; \sigma \circ \{U? := [c\,|\,U_1?]\})
\]
where $c$ is a ground term, $U$ the person writer, and $U_1$ a fresh variable, the person writer after the transition; the partial equivalence is the maGLP transaction equivalence, with grant transitions outside its domain.
\end{definition}
A grant is the person's act: a construct's tap or submission grants the command term it carries.  Grants lie outside the domain of the equivalence.  Cancel is local to a construct, so removing Change-Volitions have no image in the implementation, and need none: they carry no liveness obligation.  Definitions~\ref{def:proper-run} (proper run) and~\ref{def:correct-realised} (correct) apply to GLP with a person verbatim, grants leaving the resolvent unchanged.
Each pending volition-guarded clause (Definition~\ref{def:pending}) is an \emph{ask}; the asks are rendered per the manifest (Definition~\ref{def:manifest}) --- a respond clause's ask as a card, its offer escrowed under a request identifier, a request clause's as a form --- and the construct's grant carries the answer or the command; the pending table is in the mediator, not in the program.

\begin{restatable}[Compilation]{theorem}{restateCompilation}\label{thm:compilation}
For the simple vGLP program $M$ realising a CVA platform (Theorem~\ref{thm:realisation}), vcGLP$(M)$ has a correct implementation by GLP with a person, the constructs of the manifest of $M$ performing the grants.
\end{restatable}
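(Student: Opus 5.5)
Since the statement only asks for a correct implementation (Definition~\ref{def:implementation-properties}), the plan is to build the compiled GLP program $N$ together with a mapping $\sigma$ from configurations of GLP with a person back to vcGLP$(M)$ configurations, and then show that $\sigma$ sends every proper correct run to a correct run. \textbf{Construction of $N$.} For each volition-guarded clause $C$ of $M$, with question $X_1,\ldots,X_i$ and context $Y_1?,\ldots,Y_j?$, the agent gets an ordinary clause $C^\circ$. This clause reads a command $\mathit{cmd}(C,\theta,\theta')$ from a stream supplied by the mediator, unifies the question writers with $\theta$, and checks that the context equals $\theta'$; otherwise it behaves exactly as $C$. Ordinary clauses are left unchanged. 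The mediator reads the person channel $U?$, which grants extend with ground commands. It keeps a table of pending asks, one per pending clause on a goal, indexed by request identifiers, and forwards a granted command to the agent only when that command matches a pending ask. \textbf{The map $\sigma$.} The resolvent component is the agent's resolvent with the compiled command-reading arguments erased and the mediator goals removed. The volitional component $V_p$ is the set of volitions that have been granted and are not yet consumed: these are the commands sitting in the mediator's forwarding buffer, or in the agent's command stream, that no $C^\circ$-reduction has consumed yet. At the initial configuration nothing has been granted, so $\sigma$ of the initial configuration is the vcGLP initial state, as Definition~\ref{def:implementation} requires.

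\textbf{Safety.} The next step is a case analysis of each transition of GLP with a person, showing that its image is either a stutter or a vcGLP transition. A grant adds a new volition to $V_p$, so its image is a Change-Volition; a grant that duplicates a volition already in $V_p$ is a stutter. An agent reduction with an ordinary clause maps to the same Reduce, because $M$ is simple and the clauses preceding $C$ are unit or volition-guarded, so compiling them does not alter first-clause selection among ordinary clauses. A reduction with $C^\circ$ consumes the command $\mathit{cmd}(C,\theta,\theta')$. It maps to the volition-guarded Reduce of $C$ under $\theta$, since the fulfilled volition is exactly $(C,\theta,\theta')\in V_p$, and the removal of that command matches $V'_p=V_p\setminus\{v\}$. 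Internal steps of the mediator and Communicate steps on the command streams are stutters. Communicate steps on program variables map to Communicate. For the clause-selection check, I will use the same reasoning as in the proof of Theorem~\ref{thm:volitional-soundness}: a command flows only after a grant, so the argument is a bookkeeping invariant over the run. Elicitation completeness (Theorem~\ref{thm:completeness}) ensures that the mediator's table of asks matches the pending volitions, so what the constructs can grant is exactly what is pending.

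\textbf{Liveness, the main obstacle.} I have to show that in the image run, any class enabled continuously from some point on eventually occurs. Reduce and Communicate classes of ordinary goals inherit their obligation directly from the correctness of the GLP run, since each corresponds to a class of the same goal. The hard case is a volition-guarded Reduce that is enabled in the image but has no counterpart that is enabled in the implementation. Here a command sits in the mediator while the mediator's forwarding, or the agent's $C^\circ$ reduction, is still waiting. To close this gap I will invoke Theorem~\ref{thm:simple-liveness} together with Monotonicity (Proposition~\ref{prop:glp-monotonicity}) and Remark~\ref{rem:question-stands}. First, the mediator's forwarding steps are themselves in the domain of the equivalence and are continuously enabled, so the correct GLP run eventually takes them. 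Once forwarded, the $C^\circ$ reduction is continuously enabled, because the question stands; fairness of the GLP run then forces the reduction or the preemption of a preceding unit clause, and in the latter case the class is no longer enabled in the image. The subtle point I must handle is that a vcGLP class is the same-goal, same-clause class, while in the implementation a command may be consumed by a later recursive instance of the agent goal. I expect to resolve this by arguing from the fact that grants lie outside the domain of the equivalence, together with the identity of goals across reductions given by Definition~\ref{def:goal-identity}.
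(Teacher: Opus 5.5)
Your architecture is the paper's: compiled clauses consume granted terms, a mediator holds a pending table under request identifiers, $\sigma$ reads goals back to their sources with $V$ the granted-but-unconsumed terms, grants map to Change-Volitions, and mediator steps map to stutters. The gap is that you treat $M$ as an arbitrary simple program. The argument breaks exactly where the paper relies on the shape of the canonical realisation (Definitions~\ref{def:canonical} and~\ref{def:canonical-compilation}), at two points.

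\textbf{Routing.} A mediator goal cannot inspect the agent's resolvent, so it cannot maintain ``a table of pending asks, one per pending clause on a goal.'' Also, by the single-occurrence invariant a command stream has one reader, so it cannot feed the several dynamically spawned responder goals that coexist. The missing idea is the escrow. Each responder's forward clause itself sends \texttt{offer(Q,R)}, with a fresh answer writer $R$, to the mediator. The mediator files $R$ under \texttt{req(N)} and binds it when the card's answer is granted. Request procedures need no table, since each has a single perpetual goal with its own command stream.

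\textbf{Clause selection.} Your $\sigma$ counts a command still in the mediator's buffer as a volition in $V_p$. So from the grant on, $C$ is the first succeeding clause of the source goal in the image, while $C^\circ$ cannot yet fire. An ordinary clause \emph{after} $C$ that succeeds on the same goal can then reduce in the implementation, and its image is not a vcGLP Reduce. Simplicity constrains only the clauses preceding $C$, so your justification does not cover this. The paper closes it structurally. A compiled request procedure has one clause. The only ordinary clause after the answer clause in \texttt{await\_t} is the state-advancing clause, whose guard is the complement of the precondition and so fails whenever $C$ could succeed.

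\textbf{Citations.} Two of them do no work. Theorem~\ref{thm:simple-liveness} assumes a proper correct vmaGLP run, which is exactly what you must establish for $\sigma(\rho)$. It also does not apply to $\rho$, a run of GLP with a person, so invoking it is circular. Theorem~\ref{thm:completeness} concerns the manifest's constructs in vmaGLP configurations, not your mediator's table. Safety does not need it anyway, since a Change-Volition may add any volition. The direct argument you then give is the paper's: the mediator's steps lie in the domain of the equivalence and are taken by the liveness of $\rho$, after which the compiled reduction is enabled until taken. It suffices once the two points above are fixed.

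\textbf{Goal identity.} Your ``subtle point'' is not an obstacle. A Reduce class is indexed by goal and clause, not by answer, and $\sigma$ reads each compiled goal back to its own source goal. So whichever command the current request goal consumes, its Reduce is a member of that goal's class. That grants lie outside the domain of the equivalence is irrelevant here.

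\textbf{Duplicate grants.} By making a duplicate grant a stutter you create a mismatch. When one of two identical commands is consumed, $v$ stays in the image's $V_p$, so that Reduce's image violates $V'_p=V_p\setminus\{v\}$. The paper's set-valued $V$ has the same set-versus-multiset issue, but having raised it, your proof must resolve it.
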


The construction --- the canonical compilation and its mapping --- and the proof are in Appendix~C of the supplementary material; the mapping takes a construct's grant to the Change-Volition adding the willed volition, and the reduction consuming a granted command to the volition-guarded reduction fulfilling its volition.

\mypara{The multiagent lift}
The volitional layer and the compilation are per-agent: no cross-agent transaction reads a volitional state, and the person channel is local to its agent.  The compilation therefore applies per agent to the vmaGLP transition system over $P$ and $M$, its Communicate and Cold-call transactions carried unchanged by the multiagent GLP runtime~\cite{shapiro2025glp,shapiro2026implementing}.

\mypara{The current implementation}
The standard GLP implementation, extended with the two user-interface constructs, realises the compilation today, per platform, with no engine changes and no compiler --- each vGLP program is compiled ``manually'' (by AI) into a GLP agent plus a per-platform \emph{mediator}: pending reductions escrowed in a pending table under request identifiers, rendered as inbox cards; the person's answers returned through command constructors routing the escrowed writer back to the agent; volitions fulfilled by construction.  The composite --- engine, mediator, and the Dart bridge rendering the constructs --- is the platform's implementation of vGLP.  The source and its ``manual'' compilation into an agent and mediator (tested and deployed) exhibit it clause for clause, and Section~\ref{sec:platforms} shows the before and after per platform.

We claim, informally, that the composite realises the compilation. A tap or submission on a construct is a grant: the granted command adds the volition, and cancel withholds it. An ask is an entry in the pending table: the mediator escrows each pending volition-guarded reduction under a request identifier that carries its question. A volition-guarded Reduce is the consumption of the escrowed answer: the command routes the escrowed writer back to the agent, and the clause the person chose reduces. An entry can be consumed only once, since its writer has a single reader.
\section{Grassroots Platforms}\label{sec:platforms}

This section specifies three grassroots platforms --- the grassroots social graph, grassroots social network, and grassroots coins --- each as volition-guarded transactions, as communicating volitional agents (Section~\ref{sec:cva}), and as a vGLP program with its compilation to the deployed GLP implementation.

Appendix~B of the supplementary material exhibits the ``manual'' compilation, verbatim, on the pay transaction and the swap offer; the full sources will be released in a public repository once the paper is deanonymised.

\subsection{Grassroots Social Graph}
\label{sec:grassroots-sn}

In a grassroots social network~\cite{shapiro2023gsn}, the social graph is stored distributively under the control of the people themselves, with each person storing the local neighbourhood pertaining to them, and no third-party having access unless explicitly granted.  Each agent $p$ maintains, as its local state, a finite set $c_p\subseteq P$ recording the friends of $p$; initially $c_p = \emptyset$.  Befriending adds $q$ to $c_p$ and $p$ to $c_q$; unfriending removes them.

For agents $P \subset \Pi$ with local-states function $S(P) := 2^P$ and initial state $c_p := \emptyset$ for all $p \in P$, the social graph volition-guarded transactions are:

\begin{definition}[Grassroots Social Graph Volition-Guarded Transactions]\label{def:sg-transactions}
\begin{enumerate}
    \item \textbf{Befriend$(p,q)$}: $c'_p := c_p \cup \{q\}$, $c'_q := c_q \cup \{p\}$, provided $q \notin c_p$.  Guarded by $\{p,q\}$.
    \item \textbf{Unfriend$(p,q)$}: $c'_p := c_p \setminus \{q\}$, $c'_q := c_q \setminus \{p\}$, provided $q \in c_p$.  Guarded by either $p$ or $q$.
    \item \textbf{Introduce$(p,q,r)$}: $c'_q := c_q \cup \{r\}$, $c'_r := c_r \cup \{q\}$, provided $q,r \in c_p$, $q \ne r$, and $r \notin c_q$.  Guarded by $\{q,r\}$.
\end{enumerate}
\end{definition}

Befriending requires both persons to be willing; unfriending can be initiated by either; an introduction, offered by a common friend $p$, forms the friendship $\mathrm{Befriend}(q,r)$ once both $q$ and $r$ are willing.  The grassroots social graph and social network are introduced and analysed in~\cite{shapiro2023gsn,lewis2026volitional}.

\mypara{CVA realisation} As a CVA platform (Section~\ref{sec:cva}), the social graph keeps the friend set $c_p$ as its platform state and realises Befriend and Unfriend as unary platform transactions over CVA messages.  Befriend, guarded in the specification by $\{p,q\}$, decomposes into an offer by $p$ and an accept by $q$; Unfriend, guarded by either, is a single unilateral transaction; an introduction decomposes into an offer by the common friend $p$ and an acceptance by each of $q$ and $r$.  Five transactions are \emph{volitional} --- guarded by the acting person, and the volitions the UI elicits --- and three are reactive:

\begin{definition}[Social Graph CVA Platform Transactions]\label{def:sg-cva}
\leavevmode

\mypara{Offer friend} (guarded by $p$)  For $q\in\mathit{known}_p$ with $q\notin c_p$: add $\mathsf{message}(p,q,\mathsf{friend\_request})$ to $o_p$.

\mypara{Accept friend} (guarded by $q$)  Provided $\mathsf{message}(p,q,\mathsf{friend\_request})\in i_q$ and $p\notin c_q$: set $c_q := c_q\cup\{p\}$ and add $\mathsf{message}(q,p,\mathsf{accept})$ to $o_q$.

\mypara{End friend} (guarded by $p$)  For $q\in c_p$: set $c_p := c_p\setminus\{q\}$ and add $\mathsf{message}(p,q,\mathsf{unfriend})$ to $o_p$.

\mypara{Offer introduction} (guarded by $p$)  For $q,r\in c_p$ with $q\ne r$ and $r\notin c_q$: mint a fresh channel pair and add $\mathsf{message}(p,q,\mathsf{intro}(r,\mathit{ch}_q))$ and $\mathsf{message}(p,r,\mathsf{intro}(q,\mathit{ch}_r))$ to $o_p$, one half of the pair to each.

\mypara{Accept introduction} (guarded by $q$)  Provided $\mathsf{message}(p,q,\mathsf{intro}(r,\mathit{ch}_q))\in i_q$ and $r\notin c_q$: acknowledge on $\mathit{ch}_q$.

\mypara{Integrate accept} (reactive)  Provided $\mathsf{message}(q,p,\mathsf{accept})\in i_p$ and $q\notin c_p$: set $c_p := c_p\cup\{q\}$.

\mypara{Integrate unfriend} (reactive)  Provided $\mathsf{message}(p,q,\mathsf{unfriend})\in i_q$ and $p\in c_q$: set $c_q := c_q\setminus\{p\}$.

\mypara{Integrate introduction} (reactive)  Provided $q$ has acknowledged $\mathsf{intro}(r,\mathit{ch}_q)$ and a matching acknowledgement from $r$ is in $i_q$: set $c_q := c_q\cup\{r\}$.
\end{definition}

The GLP implementation is an agent realising these platform transactions --- the introduction's channel-passing handshake among them (\texttt{introduce}, \texttt{befriend\_intro}, \texttt{accept\_intro}, \texttt{reject\_intro}) --- and a mediator, the UI that elicits the volitions.  The epoch-encoded friendship lifecycle is given in~\cite{lewis2026volitional}.

\mypara{vGLP realisation}  The correspondence of Theorem~\ref{thm:realisation}, transaction by transaction: Offer friend is the request clause \verb|*(Target)|; Accept friend the responder pair spawned on the arrived cold call, accept and decline binding the response; End friend the one-way \verb|*(Target)| --- guarded by either party, so no sibling and no decline; Offer introduction the clause \verb|*(P,Q)|; Accept introduction the responder pair spawned on the arrived \verb|intro|; and the two Integrate transactions ordinary clauses.  The vGLP source, ``manually'' compiled into an agent and mediator (tested and deployed), renders the four request clauses as the compose commands \texttt{connect}, \texttt{unfriend}, \texttt{send}, \texttt{introduce}, and the two responders as inbox cards escrowed under \texttt{req(N)}.  The introduction, the platform's richest transaction, is exhibited in Appendix~B of the supplementary material.

\subsection{Grassroots Social Network}
\label{sec:sn}

Groups extend the social network beyond bilateral communication.  A group is a set of agents who can all communicate with each other; any agent may create a group.  Adding a member requires both the creator and the member to be willing; removing a member or leaving a group can be initiated by either.

Group communication proceeds in two steps: a member posts a message to the group (a unary transaction, guarded by the member), and the message is then delivered to all members (an unguarded transaction).

We extend the grassroots social graph (Definition~\ref{def:sg-transactions}) with group memberships and group communication.  Let $N$ be a set of group names and $X$ a set of message contents.  A \emph{group identifier} is a pair $g = (a, n)$ where $a \in P$ is the \emph{creator} and $n \in N$ is the name; we write $\mathit{cr}(g) = a$ and let $\calG$ denote the set of all group identifiers.

The local state of each agent $p$ is extended with a finite set $\mu_p \subseteq \calG$ of group memberships, with $p$ being a \emph{member} of $g$ when $g \in \mu_p$.  For each $g \in \calG$, $p$ also maintains a \emph{member output stream} $o_p^g \in (P \times X)^*$, a \emph{chat stream} $s_p^g \in (P \times X)^*$, and a \emph{delivery index} $\delta_p^g \in \mathbb{N}$.  Initially $\mu_p = \emptyset$, $\delta_p^g = 0$, and $o_p^g = s_p^g = \varepsilon$.  The friend sets of Definition~\ref{def:sg-transactions} are unchanged by the transactions below.

The grassroots social network transactions include all social graph transactions of Definition~\ref{def:sg-transactions}, together with the following:

\begin{definition}[Grassroots Social Network Volition-Guarded Transactions]\label{def:sn-transactions}
\begin{enumerate}
    \item \textbf{Create group$(g)$}: A unary transaction by agent $a := \mathit{cr}(g)$, guarded by $a$, with participants $\{a\}$, provided $g \notin \mu_a$: $\mu'_a := \mu_a \cup \{g\}$.
    \item \textbf{Join group$(g, q)$}: A binary transaction with participants $\{a, q\} \subseteq P$, $a := \mathit{cr}(g) \ne q$, guarded by $\{a, q\}$, provided $g \notin \mu_q$: $\mu'_q := \mu_q \cup \{g\}$.
    \item \textbf{Leave group$(g, q)$}: A binary transaction with participants $\{a, q\} \subseteq P$, $a := \mathit{cr}(g) \ne q$, guarded by either $a$ or $q$, provided $g \in \mu_q$: $\mu'_q := \mu_q \setminus \{g\}$.
    \item \textbf{Post to group$(g, p, x)$}: A unary transaction by agent $p$, guarded by $p$, with $g \in \mu_p$: $o'^{g}_p := o_p^g \cdot (p, x)$.
    \item \textbf{Deliver to group$(g)$}: An unguarded transaction with participants $\{q \in P : g \in \mu_q\}$; provided some participant $p$ has $\delta_p^g < |o_p^g|$, let $(p, x) := o_p^g[\delta_p^g]$, the effect is $\delta'^{g}_p := \delta_p^g + 1$ and, for every participant $q$, $s'^{g}_q := s_q^g \cdot (p, x)$.
\end{enumerate}
In each transaction, all components not explicitly updated are preserved.
\end{definition}

Deliver-to-group is an abstraction of an atomic-broadcast primitive in the sense of~\cite{cristian1995atomic}: a posted message is appended, by a single unguarded transaction, to the chat stream of every current member of the group.

A \emph{public feed} created by agent $a$ is defined to be a group $g$ with $\mathit{cr}(g) = a$ in which the post-to-group transaction is restricted to $p = a$; all other members are \emph{followers}.

A \emph{direct message channel} between friends $p$ and $q$ is realised by $p$ creating a group $g$ with $\mathit{cr}(g) = p$ and $q$ joining it; once those transactions have been carried out, the members of $g$ are $\{p, q\}$.  The guards of the join-group transaction ($\{p, q\}$) coincide with the befriend guards.

\mypara{CVA realisation} The social network adds groups, hosted by their creator and realised as CVA platform transactions.  Membership, like befriending, is an offer/accept pair; posting is a unary transaction whose delivery to members is reactive.

\begin{definition}[Social Network CVA Platform Transactions]\label{def:sn-cva}
\leavevmode

\mypara{Create group} (guarded by $a$)  For a fresh name, create the group $g$ with $\mathit{cr}(g)=a$ and initialise its hub: $\mu_a := \mu_a\cup\{g\}$.

\mypara{Offer membership} (guarded by $a=\mathit{cr}(g)$)  For $q\in c_a$: add $\mathsf{message}(a,q,\mathsf{group\_invite}(g))$ to $o_a$.

\mypara{Accept membership} (guarded by $q$)  Provided $\mathsf{message}(a,q,\mathsf{group\_invite}(g))\in i_q$ and $g\notin\mu_q$: set $\mu_q := \mu_q\cup\{g\}$ and add $\mathsf{message}(q,a,\mathsf{group\_join}(g))$ to $o_q$.

\mypara{Post to group} (guarded by $p$)  For $g\in\mu_p$ and content $x$: add $\mathsf{message}(p,\mathit{cr}(g),\mathsf{group\_msg}(g,p,x))$ to $o_p$.

\mypara{Leave group} (guarded by $p$)  For $g\in\mu_p$ with $p\ne\mathit{cr}(g)$: set $\mu_p := \mu_p\setminus\{g\}$ and add $\mathsf{message}(p,\mathit{cr}(g),\mathsf{group\_leave}(g))$ to $o_p$.

\mypara{Remove member} (guarded by $a=\mathit{cr}(g)$)  For a member $q$: add $\mathsf{message}(a,q,\mathsf{group\_removed}(g))$ to $o_a$ and drop $q$ from the hub.

\mypara{Distribute} (reactive, $a=\mathit{cr}(g)$)  On $\mathsf{message}(p,a,\mathsf{group\_msg}(g,p,x))\in i_a$: add $\mathsf{message}(a,r,\mathsf{group\_msg}(g,p,x))$ to $o_a$ for every member $r$.

\mypara{Integrate} (reactive)  Membership grants, group messages, leaves, and removals are integrated on receipt, updating $\mu$ and the local chat stream.
\end{definition}

A \emph{public feed} is a group in which only the creator posts, and a \emph{direct message channel} a two-member group, so both reuse these transactions unchanged.  The GLP implementation comprises an agent, the mediator, and the child agent, per the child-safe network~\cite{shapiro2026cssn}.

\mypara{vGLP realisation}  The same correspondence carries over: Create group, Offer membership, Post, Leave, and Remove are request clauses \verb|*(...)|; Accept membership is a responder pair spawned on the arrived invite; Distribute and the Integrations are ordinary clauses.  The vGLP source's one-to-one messaging is verified clause for clause against the deployed agent; groups are specified, not yet deployed (Section~\ref{sec:ui-primitives}).

\subsection{Grassroots Currencies}
\label{sec:coins-bonds}

Grassroots coins~\cite{shapiro2024gc,lewis2023grassroots} are units of debt that can be issued and traded digitally by any person.  Each person's coins are backed by the goods and services they offer, priced in their own currency, with liquidity arising from mutual credit via coin exchange among persons that know and trust each other.  Grassroots bonds~\cite{shapiro2026bonds} extend grassroots coins with a maturity date, reframing grassroots coins---cash---as mature grassroots bonds.  Coin-for-bond redemption generalises coin-for-coin redemption, allowing the lending of liquid coins in exchange for interest-bearing future-maturity bonds.  Digital social contracts---voluntary agreements among persons, specified, fulfilled, and enforced digitally---can express the full gamut of financial instruments as the voluntary swap of grassroots bonds, including credit lines, loans, sale of debt, forward contracts, options, and escrow-based instruments~\cite{shapiro2026bonds}.

\begin{definition}[Grassroots Bonds]\label{def:bonds}
A \temph{$p$-bond with maturity date $d$}, denoted \textcent$_{p,d}$, is a unit of debt issued by $p \in \Pi$ maturing at date $d\in \calN$.  We let $\calB(P) =\{$\textcent$_{p,d} : p\in P, d\in \calN\}$ denote the set of all grassroots bonds by agents $P\subset\Pi$.  Each agent $p$ maintains as its local state a pair $(c_p, d_p^*)$ where $c_p$ is a multiset of members of $\calB(P)$ (initially $\emptyset$) and $d_p^*\in \calN$ is the local current date (initially $0$); $p$ considers a bond \textcent$_{q,d}$ to be \temph{mature}, and refers to it as a \temph{$q$-coin} (denoted \textcent$_q$), iff $d\le d_p^*$.  There is no global date; agents may disagree on which bonds are mature.
\end{definition}

The grassroots bonds volition-guarded atomic transactions are:
\begin{tcolorbox}[colback=gray!5!white,colframe=black!75!black,top=2pt,bottom=2pt]
\begin{enumerate}
    \item \textbf{Mint}: $c'_p := c_p \cup \text{\textcent}^k_{p,d}$, $k>0$, $d\in\calN$; $d_p^*$ unchanged.  Guarded by $p$.
    \item \textbf{Advance-date}: ${d_p^*}' > d_p^*$; $c_p$ unchanged.  Unguarded.
    \item \textbf{Voluntary swap}: $c'_p := (c_p\cup y)\setminus x$, $c'_q := (c_q\cup x)\setminus y$, provided $x\subseteq c_p$, $y\subseteq c_q$; $d_p^*$ and $d_q^*$ unchanged.  Guarded by $\{p,q\}$.
    \item \textbf{Pay}: $c'_p := c_p\setminus x$, $c'_q := c_q\cup x$, where $x\subseteq c_p$ is a set of $q$-coins (that is, bonds \textcent$_{q,d}$ with $d\le d_p^*$); $d_p^*$ and $d_q^*$ unchanged.  Guarded by $p$.
    \item \textbf{Redeem}: $c'_p := (c_p\cup y)\setminus x$, $c'_q := (c_q\cup x)\setminus y$, where $x = \{\text{\textcent}_{q,d'}\}\subseteq c_p$ with $d'\le d_p^*$, $y = \{\text{\textcent}_{r,d}\}\subseteq c_q$, $r\in P$, $d\in\calN$; $d_p^*$ and $d_q^*$ unchanged.  Guarded by $p$.
\end{enumerate}
\end{tcolorbox}
\noindent Minting, paying, and redeeming are guarded by the initiator; voluntary swap requires both parties to be willing; Advance-date is unguarded, since local time advances mechanically.  In redemption, the redeemer chooses any bond held by the coin's issuer---regardless of who issued the bond---generalising coin-for-coin redemption~\cite{shapiro2024gc} to coin-for-bond redemption~\cite{shapiro2026bonds}.

\begin{definition}[Grassroots Coins and Bonds]\label{def:gcb}
The \temph{grassroots coins and bonds} $\mathit{GCB}$ is the protocol over the volition-guarded transactions above with local-states function mapping each $P\subset\Pi$ to the set of pairs $(c,d)$ where $c$ is a multiset of members of $\calB(P)$ and $d\in\calN$, and equivalence $\sim$ identifying Mint transactions by the same agent $p$ with the same $k$ and $d$, Advance-date transactions of the same agent, and Swap transactions between the same pair exchanging the same multisets, per the protocol construction of Section~\ref{sec:foundations}.
\end{definition}

Grassroots coins and bonds are introduced and analysed in~\cite{shapiro2024gc,shapiro2026bonds}.

\mypara{CVA realisation} As a CVA platform (Section~\ref{sec:cva}), each agent keeps its bond multiset $c_p$ and local date $d_p^*$ as platform state.  Minting, paying, and redeeming are unilateral transactions guarded by the initiator; a voluntary swap is an offer/accept pair; advancing the local date is unguarded.

\begin{definition}[Coins and Bonds CVA Platform Transactions]\label{def:cb-cva}
\leavevmode

\mypara{Mint} (guarded by $p$)  For $k>0$ and maturity $d$: add $k$ fresh bonds \textcent$_{p,d}$ to $c_p$.

\mypara{Propose swap} (guarded by $p$)  Offering $x\subseteq c_p$ and seeking a specification $y$: remove $x$ from $c_p$ and add $\mathsf{message}(p,q,\mathsf{trade\_propose}(y,x))$ to $o_p$.

\mypara{Accept swap} (guarded by $q$)  Provided a $\mathsf{trade\_propose}(y,x)$ from $p$ in $i_q$ and $y\subseteq c_q$: set $c_q := (c_q\cup x)\setminus y$ and return $\mathsf{trade\_accept}(y)$ to $p$.

\mypara{Decline swap} (guarded by $q$)  Provided a $\mathsf{trade\_propose}(y,x)$ from $p$ in $i_q$: return $\mathsf{trade\_decline}(x)$ to $p$.

\mypara{Pay} (guarded by $p$)  A swap offering coins $x$ and seeking nothing; the recipient integrates without a decision.

\mypara{Redeem} (guarded by $p$)  A swap offering $q$-coins and seeking any bond of issuer $q$; the issuer fills automatically when able.

\mypara{Deposit / Cancel escrow} (guarded by $p$)  Place bonds in a time-locked escrow for a beneficiary, or cancel before release --- the building block for digital social contracts.

\mypara{Integrate} (reactive)  Accepted swaps, declined swaps --- refunding the give-coins to the proposer --- payments, redemption fills, escrow releases, and local-date advances are integrated on receipt.
\end{definition}

The GLP implementation comprises an agent, a mediator, and actors, with a secure variant.  The full secure development is given in~\cite{lewis2026volitional}.

\mypara{vGLP realisation}  Mint, Pay, Redeem, and Propose swap are the request clauses \verb|*(K)|, \verb|*(Friend,Coin,Amt)| (the worked example of Section~\ref{sec:vglp}), \verb|*(Friend,Amount)|, and \verb|*(Friend,GiveCoin,GiveAmt,WantCoin,WantAmt)|; Accept and Decline swap are the responder pair spawned on the arrived \verb|swap_propose|, its accept declining itself when the want-coins are not held; fills, deliveries, and balances are ordinary clauses.  The vGLP source of the coins agent, and its ``manual'' compilation into an agent and mediator --- the worked before-and-after of Sections~\ref{sec:vglp} and~\ref{sec:elicitation}; its minimal request clause, Mint, is the opening exhibit of Section~\ref{sec:introduction}.  Both sources place the request clauses last in the agent procedure, after the catch-alls.
\section{The Grassroots App}\label{sec:ui-primitives}

The grassroots app is the working smartphone app the approach yields: the grassroots social graph, social network, and coins, each a vGLP program ``manually'' compiled to its agent and mediator (Section~\ref{sec:platforms}), rendered by one Dart interpreter as three panels of one app --- \emph{Friends}, \emph{Chats}, and \emph{Coins} --- each reached by a bottom-bar icon badged with the panel's pending asks (Figure~\ref{fig:grassapp}).  Every element of every panel is one of the manifest's constructs (Definition~\ref{def:manifest}): the cards of the platform's respond clauses, the forms of its request clauses, and the screen (See Appendix~B).  The deployed build implements the social graph, one-to-one conversations, and coins among friends (Appendix~A); groups and bonds, specified in Section~\ref{sec:platforms}, add no new constructs and will be implemented shortly.

\mypara{Friends}  The social graph.  Its forms are the four request clauses --- offer friendship, unfriend, message a friend, introduce two friends; its cards the two sibling pairs --- a friend offer and an introduction, each accepted or declined; its screen the friends list, kept by \texttt{connected}, \texttt{unfriended}, and \texttt{rejected}.

\mypara{Chats}  The social network --- one-to-one and group messaging, element for element the messaging app people already know (Table~\ref{table:whatsapp}).  The open conversation's input is the persistent form of the tail-recursive send clause (Remark~\ref{rem:persistence}); the chat list and threads are the screen; the panel's card is a group invitation, accepted before the group joins the chat list.  A grassroots machine carries a message only to an established friend, so first contact is gated in \emph{Friends}: what a commercial messenger lets land unbidden --- a stranger's first message, being added to a group --- grassroots holds as a card until the person accepts.

\begin{table}[t]
\centering
\begin{tabular}{@{}p{4.5cm}p{6.5cm}@{}}
\hline
\textbf{WhatsApp element} & \textbf{Realisation} \\
\hline
Chat list, open conversation & the screen: threads extended by delivery \\
Sending a message & the persistent form of the send clause; the delivery ticks the screen \\
New chat or group, add or remove a member, leave & the forms of the group request clauses \\
Connect request, group invitation & cards: a friend offer or an invitation, accepted or declined before contact begins \\
\hline
\end{tabular}
\caption{The social-network panel maps the familiar messaging interface to the manifest's constructs.  Almost everything is the screen and one persistent form; the cards are the consent step grassroots places before first contact.}
\label{table:whatsapp}
\end{table}

\mypara{Coins}  Grassroots coins among friends (See Appendix~A).  Its forms are mint, pay, redeem, and propose a swap; its card the proposed swap, accepted or declined; its screen the balances, organised by friend --- mutual credit is confined to the social graph, so the wallet lists the person's friends, each drilling down to the coins they hold.

\mypara{One interpreter}  The manifest is declared in GLP: the compiled form of Definition~\ref{def:manifest} is the platform's \texttt{UserCmd} type --- one constructor per request clause, and per sibling pair the answers routing its \texttt{req(N)} --- and its \texttt{UserNotify} type, the cards and the screen vocabulary.  One Dart interpreter renders any manifest: everything crossing to Dart is a ground term or a request identifier; response writers and channels stay escrowed on the GLP side.  A new platform declares its two types and gains a panel, an icon, and its alerts, with no platform-specific Dart; the balances view, keyed by holder and coin, is the one generic view coins added, reusable by any balance-bearing platform.  The same interpreter, carrying only the platforms' GLP sources, builds and runs unchanged on the desktop, the iPhone simulator, and the phone: the grassroots app is deployed as a native app on a physical smartphone.

\begin{figure}[t]
\centering
\newcommand{\gpanel}[2]{%
  \begin{minipage}[t]{0.1575\textwidth}\centering
  \includegraphics[width=\linewidth]{#1}\\[2pt]{\footnotesize\textbf{#2}}%
  \end{minipage}}
\gpanel{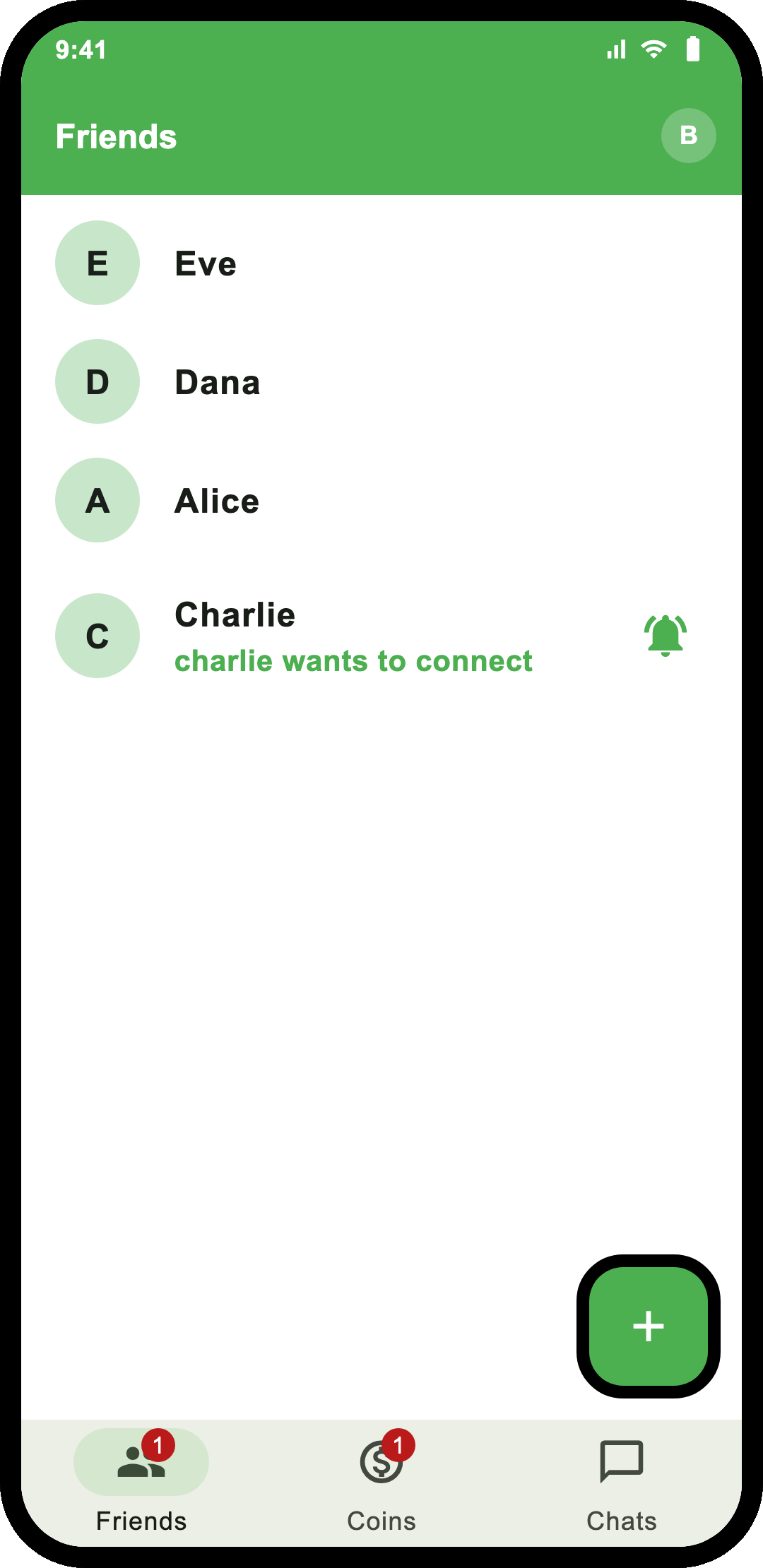}{(1)}\hfill
\gpanel{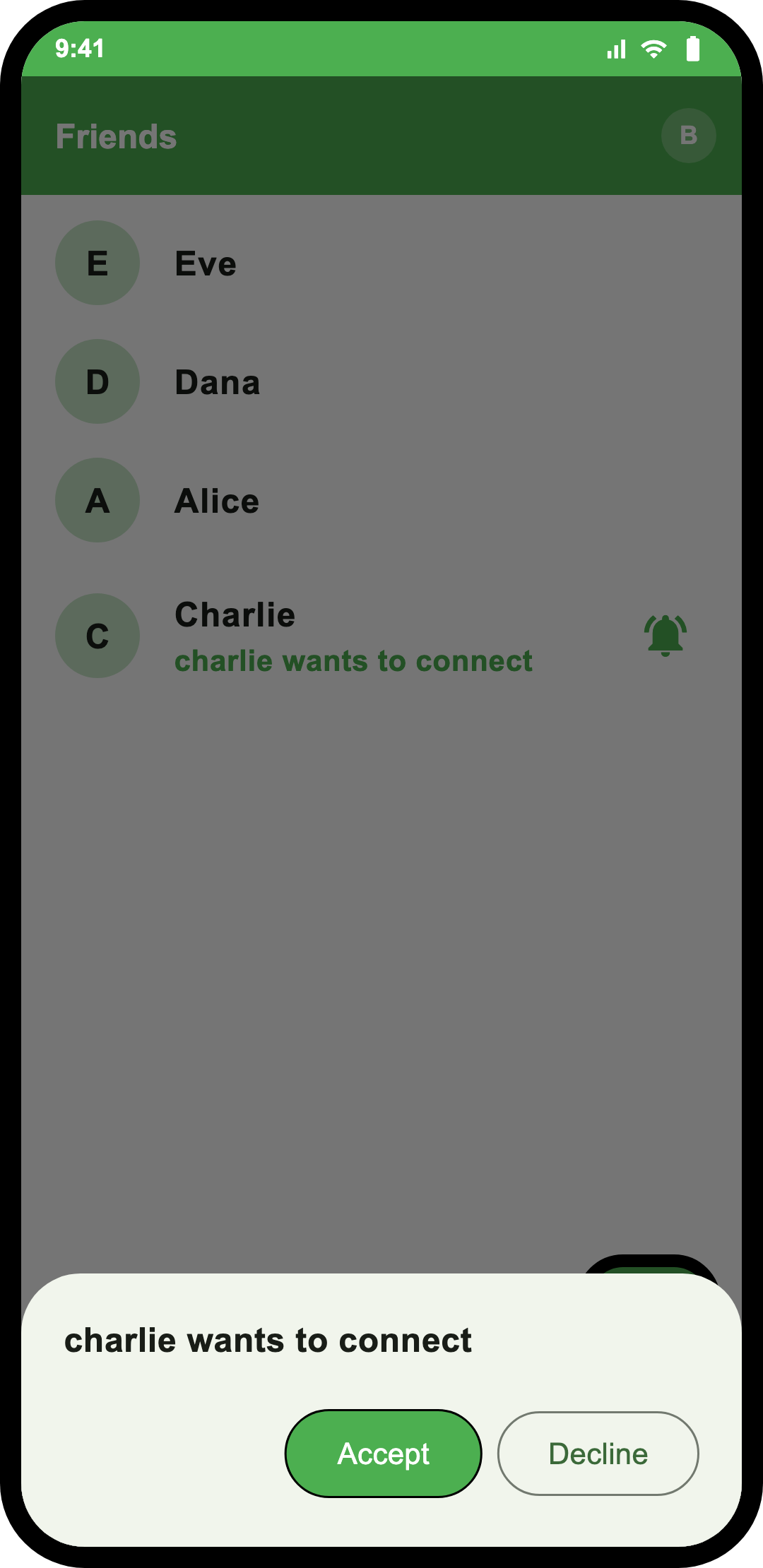}{(2)}\hfill
\gpanel{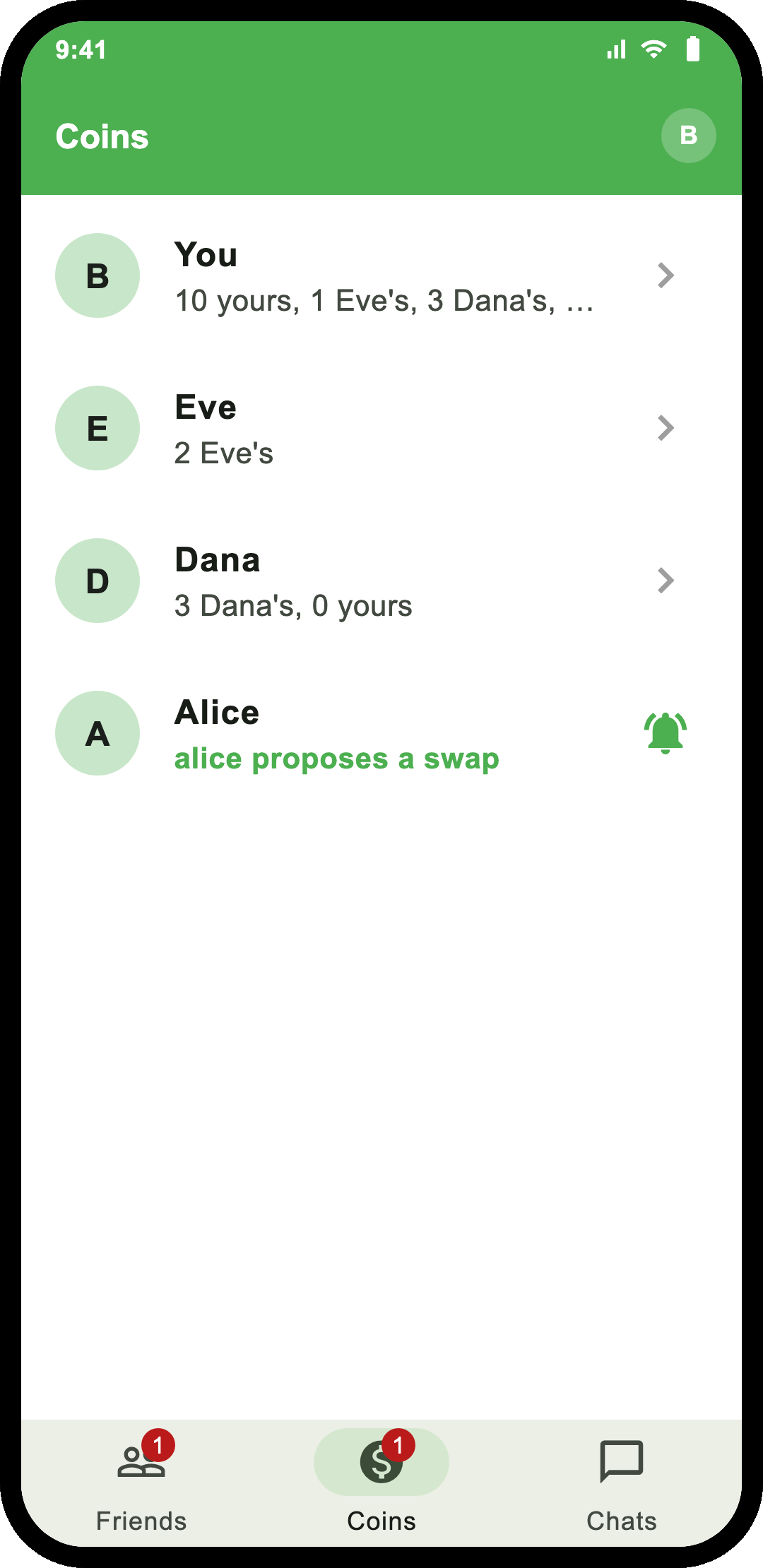}{(3)}\hfill
\gpanel{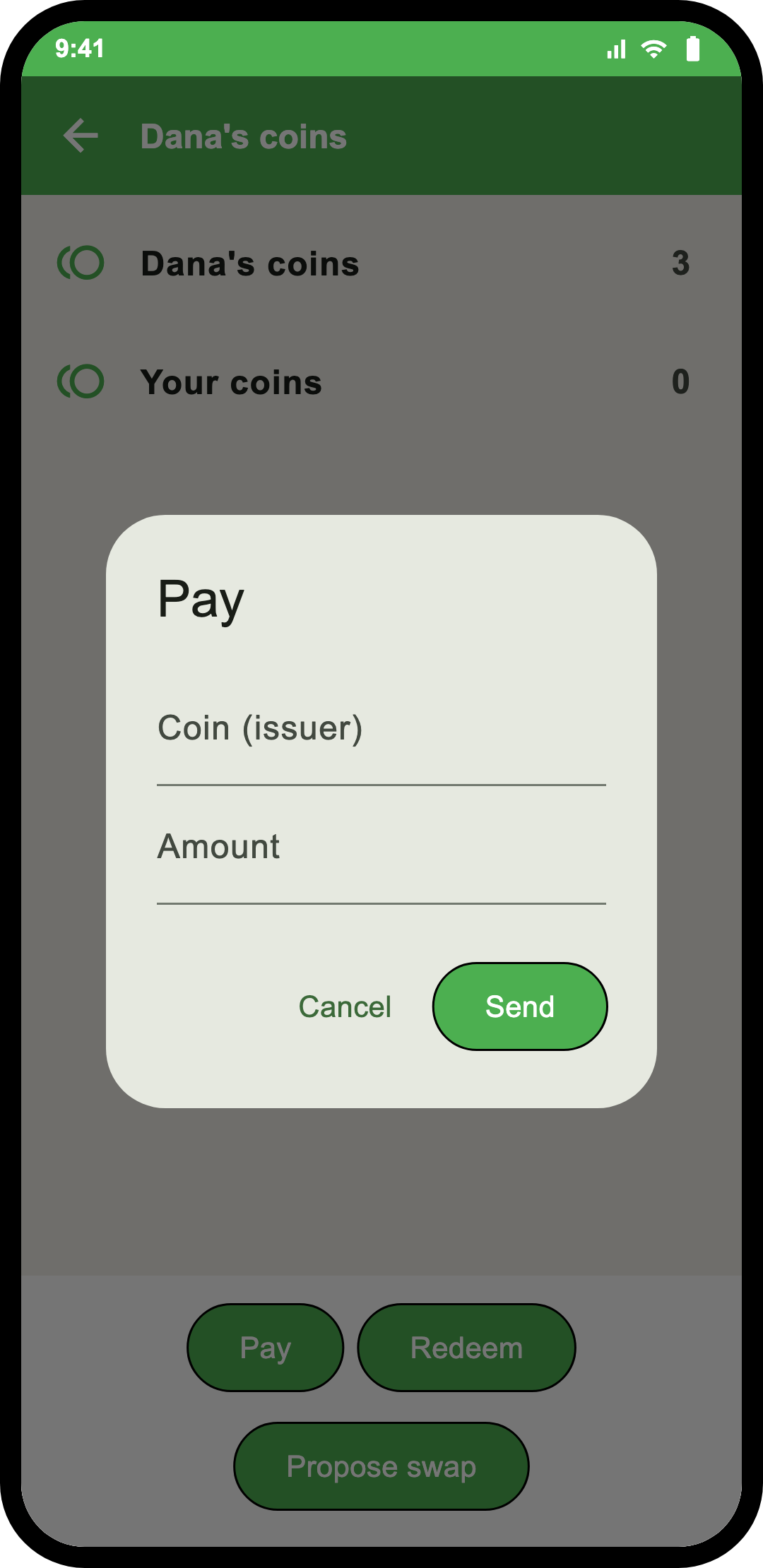}{(4)}\hfill
\gpanel{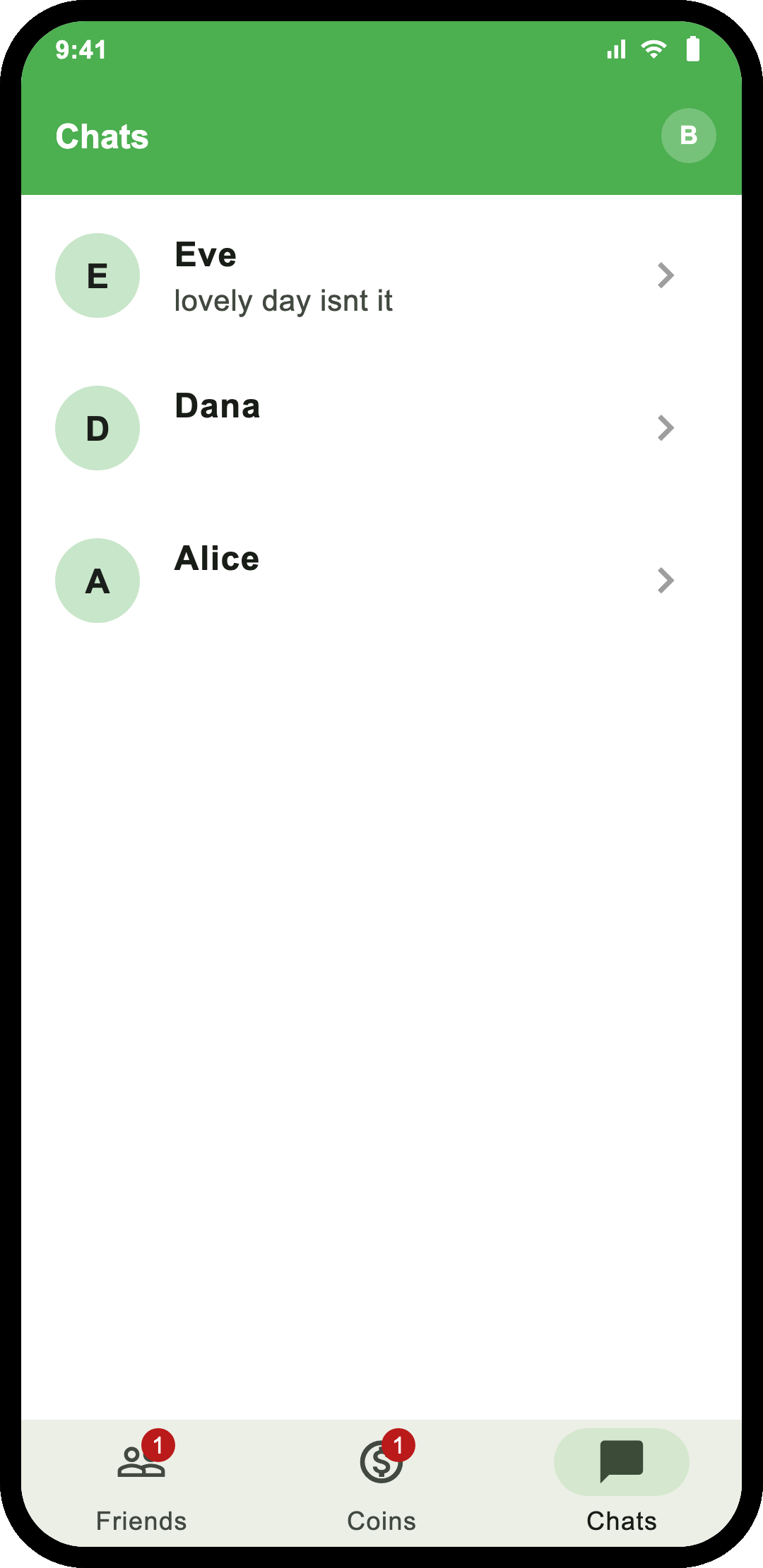}{(5)}\hfill
\gpanel{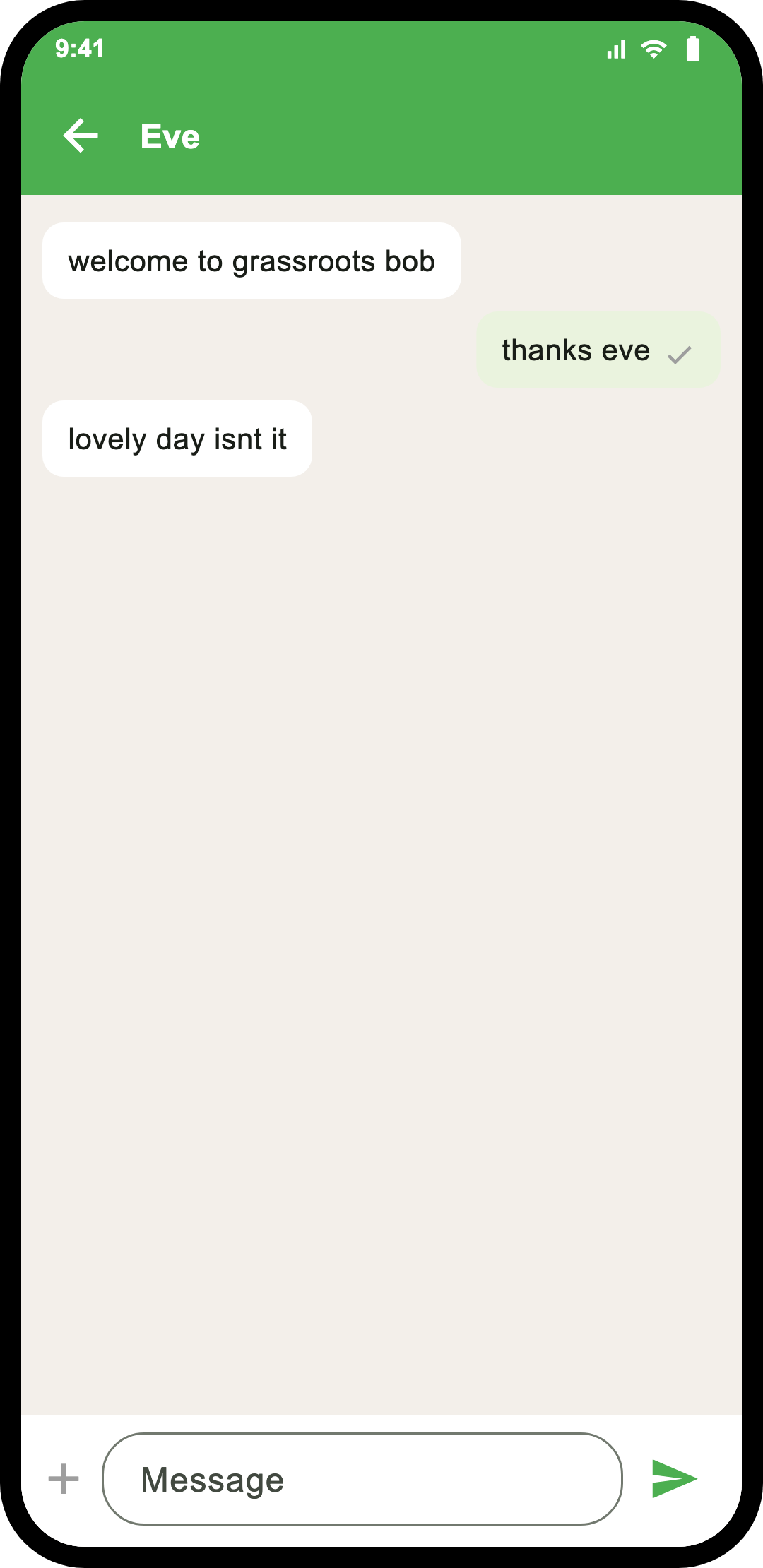}{(6)}%
\caption{\textbf{The grassroots app}: one app over three grassroots platforms, each a vGLP program rendered by a single Dart interpreter from its manifest, and each shown here by its panel (the screen) and one elicitation construct.  \emph{Social graph}: (1)~the \emph{Friends} panel --- the friends list, with Charlie's request pending; (2)~the friend-offer card --- the respond clause's Accept and Decline siblings, which elicit Befriend.  \emph{Coins}: (3)~the \emph{Coins} panel --- balances organised by friend, with Alice's swap pending; (4)~the pay form over the open friend's wallet --- a request clause whose fields are the person inputs, coin and amount.  \emph{Social network}: (5)~the \emph{Chats} panel --- the conversation list; (6)~an open conversation --- the persistent form of the tail-recursive send clause.  Each panel is reached by a bottom-bar icon badged with its pending asks, and every element is a construct of the platform's manifest (Definition~\ref{def:manifest}).}
\label{fig:grassapp}
\end{figure}
\section{Related Work}\label{sec:related-work}

vGLP makes two independent contributions: it conditions a machine step on a person's volition, and it derives the interface from the program's semantics.  The first relates to formal approaches to user interface semantics, in which the person is a source of input, an environment, a chooser, an editor, or a signer; we review these first, by the person's role.  The second relates to work that derives the interface automatically, but from an application's data, object model, or tasks rather than from a semantics whose steps a person guards; and a further tradition predicts or verifies a given interface.  In none does a person's will guard a transaction, and in none is the interface derived from transaction semantics.  Underlying these differences is what a human input \emph{is}: in the languages that present a program to its user it is a \emph{command} driving an otherwise-inert tool; in those that generate an interface from data or tasks it is a \emph{value} completing a task; in vGLP it is a \emph{volition} authorising an otherwise-ready machine transaction --- a guard, not a command or a value.

\mypara{Formal approaches to user interface semantics}  Functional reactive programming gave interaction a denotational semantics with Fran~\cite{elliott1997fran}; Fudgets composed graphical interfaces from stream processors in a lazy functional language~\cite{carlsson1993fudgets}; Elm gave a GUI language an operational semantics~\cite{czaplicki2013elm}; and Krishnaswami and Benton gave GUIs a denotational model in which the arbitrariness of user input is captured by a nondeterminism monad~\cite{krishnaswami2011gui}.  Task-oriented programming executes typed editors the person fills: iTasks~\cite{plasmeijer2007itasks,plasmeijer2012top}, and its formal core TopHat, whose labelled transitions consume user inputs~\cite{steenvoorden2019tophat}.  iTasks derives those editors generically from the task's data types, so an interface \emph{is} derived --- but from the data, where the inputs  drive the workflow. Game semantics models a program against an Opponent whose strategy is the environment's~\cite{hyland2000pcf}; interaction trees represent impure programs whose events the environment answers~\cite{xia2020itrees}; interactive computation makes the environment part of the computational model~\cite{wegner1997interaction}. 
Formal human-computer interaction modelled interactive systems directly: the PIE model~\cite{dix1991formal}, interactors~\cite{duke1993abstract}, Petri-net interactive cooperative objects~\cite{palanque1994ico}, statecharts~\cite{harel1987statecharts}, and the synchronous languages~\cite{berry1992esterel}.  These works give the interface a semantics after it is designed.  The concurrent logic languages connected keyboard and screen as stream-processing agents~\cite{shapiro1989family}; a message to the person could carry an unbound variable that the person's response assigns.  Ceptre's interactive stages let the player select which enabled linear-logic rule fires~\cite{martens2015ceptre}: the person resolves nondeterminism. Hazelnut gives an operational semantics to the person's edit actions on programs~\cite{omar2017hazel}. Smart-contract semantics fire a transition only on a transaction signed by a key~\cite{sergey2019scilla}.  
Beliefs, desires, and intentions in agent programming~\cite{rao1995bdi} are the software agent's own, not a person's, and the human-approval steps of contemporary AI-agent frameworks are architectural practice without semantics.

\mypara{The interface derived from the program}  Deriving the interface from the program, rather than authoring it separately, is an established idea, on bases distinct from ours.  Naked objects presents a system's domain objects directly and creates the interface entirely automatically from their definitions by reflection~\cite{pawson2004naked,pawson2002naked}, in a noun-verb style: the person views objects and invokes their methods, with no hand-written presentation layer.  Model-based user-interface development derives interfaces from separate task and domain models --- the CAMELEON reference framework~\cite{calvary2003cameleon}, ConcurTaskTrees and model-based design~\cite{paterno2000model}, and description languages such as UIML~\cite{abrams1999uiml} and UsiXML~\cite{limbourg2004usixml}, surveyed in~\cite{akiki2015survey}; task-oriented programming (above) derives its editors generically from data types; and structure editors derive an editing interface from a grammar~\cite{reps1984synthesizer}.  In each the interface follows from a description of the application's data, objects, or tasks.

\mypara{Command, value, or guard}  A naked-objects application is a tool: nothing happens without the operator, so the input is the control flow.  A task-oriented input is a datum the workflow consumes.  vGLP's reductions proceed on their own; a volition-guarded reduction is ready but withheld: its legitimacy requires the willingness of the party it binds --- a transfer the payer's, a friendship both parties'.  Prior work derives the interface so the person can \emph{operate} the program or \emph{supply values} to it; vGLP derives it so the person can \emph{will} its steps.

\mypara{Predicting and verifying a given interface}  Interface design is often treated as a craft outside formal computer science, yet two lines formalise it.  Predictive engineering models estimate a design's performance from the person's elementary actions --- the keystroke-level model and the GOMS family~\cite{card1980keystroke,card1983psychology,johnKieras1996goms}.  Formal methods for interactive systems specify and verify interfaces --- from grammars for user interfaces~\cite{reisner1981formal} through the collected work of the field~\cite{palanque1998formal,weyers2017handbook}, of which the dialogue models above are part.  

To the best of our knowledge, vGLP is the first programming language whose operational semantics conditions transactions on a person's volition --- elicited, pending until willed, retractable, and fulfilled --- with the interface derived from that semantics rather than from the application's data or tasks.
\section{Conclusion}\label{sec:conclusion}

We incorporated people in the operational semantics of a programming language: vGLP's volition-guarded clauses are reducible only at their person's will, the person's change-volition transitions are their own, and a willed volition is fulfilled once its reduction is taken.  The interface is derived from the semantics rather than designed: a program's questions are its pending volition-guarded reductions, and the manifest reads each clause's construct off its volition guard --- its content the context, its fields the person inputs, its buttons the siblings --- a card, a form, the persistent command-line box of a tail-recursive clause.  The realisation proceeds from volition-guarded multiagent atomic transactions through communicating volitional agents to vGLP, and by compilation onto GLP with the user-interface constructs to the running grassroots app, deployed on a physical smartphone.

Future work: native support for the volitional runtime in the GLP engine, retiring the per-platform mediator; a vGLP-to-GLP compiler, replacing ``manual'' (AI) compilation; typed person inputs, subsuming the ground guards on the person's data; and standing volitions --- a person willing a class of goals rather than one reduction.

\bibliography{bib}

\newpage
\appendix

\section{Coins Among Friends}\label{app:coins}

This appendix gives the restricted model realised by the coins app of Section~\ref{sec:ui-primitives}.  It is the grassroots coins and bonds protocol $\mathit{GCB}$ (Definition~\ref{def:gcb}) under two simplifications: (\ia) drop maturity --- there are no bonds, only coins, so the local date and Advance-date play no role; and (\ib) confine holdings and transactions to the social graph (Section~\ref{sec:grassroots-sn}) --- a person holds and trades only their friends' coins.  The full model, with bonds, maturity, and escrow-based instruments, is in Section~\ref{sec:coins-bonds} and~\cite{shapiro2024gc,shapiro2026bonds}.

Dropping maturity simplifies Definition~\ref{def:bonds}: every bond is mature, so a $p$-bond is simply a \emph{$p$-coin}, denoted \textcent$_p$, a unit of debt issued by $p$.  We let $\calC(P)=\{\text{\textcent}_q : q\in P\}$ be the coins of agents $P\subset\Pi$.  The friends restriction ties the holding graph to the social graph: mutual credit presupposes that the parties know and trust each other~\cite{shapiro2024gc}, and in a grassroots social network that relation is friendship.  So the same connection that gates contact (Section~\ref{sec:ui-primitives}) gates coin-holding: a person holds and transacts only their friends' coins, and their own.

\begin{definition}[Coins Among Friends]\label{def:coins-friends}
Let $F_p\subseteq P$ be the friends of $p$ in the grassroots social graph (Section~\ref{sec:grassroots-sn}).  Each agent $p$ maintains as its local state a multiset $c_p$ of coins drawn from $\{\text{\textcent}_q : q\in F_p\cup\{p\}\}$, initially $\emptyset$.  The \temph{coins among friends} protocol is $\mathit{GCB}$ (Definition~\ref{def:gcb}) restricted to these states, with every bilateral transaction requiring its counterparties to be friends.
\end{definition}

The guarded transactions of Section~\ref{sec:coins-bonds}, restricted to coins among friends:
\begin{tcolorbox}[colback=gray!5!white,colframe=black!75!black,top=2pt,bottom=2pt]
\begin{enumerate}
    \item \textbf{Mint}: $c'_p := c_p \cup \text{\textcent}^k_{p}$, $k>0$.  Guarded by $p$.
    \item \textbf{Pay}: $c'_p := c_p\setminus x$, $c'_q := c_q\cup x$, where $q\in F_p$ and $x\subseteq c_p$.  Guarded by $p$.
    \item \textbf{Redeem}: $c'_p := (c_p\cup y)\setminus x$, $c'_q := (c_q\cup x)\setminus y$, where $q\in F_p$, $x\subseteq c_p$ are $q$-coins, and $y\subseteq c_q$.  Guarded by $p$.
    \item \textbf{Voluntary swap}: $c'_p := (c_p\cup y)\setminus x$, $c'_q := (c_q\cup x)\setminus y$, where $q\in F_p$, $x\subseteq c_p$, $y\subseteq c_q$.  Guarded by $\{p,q\}$.
\end{enumerate}
\end{tcolorbox}
\noindent On redeem, the issuer $q$ fills automatically~\cite{shapiro2024gc}.  The bond machinery of Section~\ref{sec:coins-bonds} --- maturity, the local date, and Advance-date --- is absent.

These volitions are elicited by the constructs of Section~\ref{sec:ui-primitives} exactly as the other platforms are.  Mint, Pay, and Redeem are self-guarded request clauses; a swap is a request on the proposer's side (\textbf{Propose}) and a respond on the counterparty's (\textbf{Accept}).  Balances are read by a standing request --- the bare \verb|*| balance-report clause of the agent --- and rendered as the wallet's values.  The coins app is therefore another instance of the one interface (Section~\ref{sec:ui-primitives}): its \texttt{UserCmd}/\texttt{UserNotify} types, declared in its mediator, yield a manifest the same interpreter renders --- the test being whether a balance-bearing platform needs only a manifest, or a new generic view.
\section{Code Fragments}\label{app:code}

This appendix exhibits the vGLP source of the social graph's introduction, and the manual compilation of Section~\ref{sec:elicitation} on two fragments of the deployed sources of the grassroots app, verbatim: the pay transaction --- a request clause, elicited by a form --- and the swap offer --- a respond pair, elicited by a card.  Long clause heads are wrapped for the page; the sources are otherwise unchanged.  The code appendices are included to preserve anonymity; the final paper will include a pointer to the public repository.

\mypara{The introduction, a request clause and a respond pair}  The social graph's richest transaction (Section~\ref{sec:grassroots-sn}): the offer clause \verb|*(P,Q)|, the responder spawned on the arrived \verb|intro|, and its accept and reject siblings:

\begin{verbatim}
%% Offer an introduction of friend P to friend Q: a fresh
%% setup-channel pair, one half to each.
*(P, Q)
agent(Id, UserIn, NetIn, Outs) :-
    ground(Id?), ground(P?), ground(Q?), ~(P? =?= Q?),
    new_channel(PQCh, QPCh) |
    send_friend(P?, msg(Id?, P?, intro(Q?, QPCh?)), Outs?, Outs1),
    send_friend(Q?, msg(Id?, Q?, intro(P?, PQCh?)), Outs1?, Outs2),
    agent(Id?, UserIn?, NetIn?, Outs2?).

%% A common friend offers an introduction: spawn a responder.
agent(Id, UserIn, [msg(_From, Id1, intro(Other, Ch))|NetIn],
      Outs) :-
    Id? =?= Id1? |
    respond_intro(Id?, offer(Other?), Ch?, Ans),
    merge(Ans?, UserIn?, UserIn1),
    agent(Id?, UserIn1?, NetIn?, Outs?).

%% Accept: ack on the setup channel, await the peer, inject
%% the result.
*(Answer=yes, Other?)
respond_intro(Id, offer(Other), Ch, Ans?) :-
    ground(Id?), ground(Other?),
    send(ack(Id?), Ch?, Ch1) |
    intro_await_peer(Other?, Ch1?, Result),
    inject_intro_result(Result?, [], Ans).

%% Reject: nack on the setup channel, closing it.
*(Answer=no, Other?)
respond_intro(_, offer(Other), ch(_ChIn, [nack|[]]), []) :-
    ground(Other?) | true.
\end{verbatim}

\mypara{Pay, a request clause}  The vGLP pay clause is the worked example of Section~\ref{sec:vglp}.  Its manual compilation: the command constructor \verb|pay(Constant, Constant, Integer)| of the mediator's \verb|UserCmd| type, the mediator pass-through, and the agent clause matching the granted command:

\begin{verbatim}
ui_mediator(Id, AgentCh, UserCh, Ps, N) :-
    receive(pay(Friend, Coin, Amount), UserCh?, UserCh1),
    ground(Id?), ground(Friend?), ground(Coin?),
    integer(Amount?) |
    send(msg('_user', Id?, pay(Friend?, Coin?, Amount?)),
         AgentCh?, AgentCh1),
    ui_mediator(Id?, AgentCh1?, UserCh1?, Ps?, N?).

%% Pay Amt of Coin to a friend.
agent(Id, [msg('_user', Id1, pay(Friend, Coin, Amt))|UserIn],
      NetIn, Outs, Holdings) :-
    Id? =?= Id1?, ground(Friend?), ground(Coin?), integer(Amt?) |
    sub_coin(Coin?, Amt?, Holdings?, Status, Holdings1),
    do_pay(Status?, Id?, Friend?, Coin?, Amt?, Holdings1?,
           UserIn?, NetIn?, Outs?).
\end{verbatim}

\mypara{The swap offer, a respond pair}  The arriving proposal spawns a responder carrying the offer and the response writer; the accept sibling injects its answer through the merge, and an ordinary agent clause carries it out on the agent's current holdings; the decline sibling injects \verb|decline_swap|:

\begin{verbatim}
%% A friend proposes a swap: spawn a responder.
agent(Id, UserIn,
      [msg(From, Id1, swap_propose(GiveCoin, GiveAmt,
                        WantCoin, WantAmt, SwapResp?))|NetIn],
      Outs, Holdings) :-
    Id? =?= Id1?, ground(From?), ground(GiveCoin?),
    integer(GiveAmt?), ground(WantCoin?), integer(WantAmt?) |
    respond_swap(offer(From?, GiveCoin?, GiveAmt?,
                       WantCoin?, WantAmt?), SwapResp, Ans),
    merge(Ans?, UserIn?, UserIn1),
    agent(Id?, UserIn1?, NetIn?, Outs?, Holdings?).

%% Accept a swap offer.
*(Answer=yes, From?, GiveCoin?, GiveAmt?, WantCoin?, WantAmt?)
respond_swap(offer(From, GiveCoin, GiveAmt,
                   WantCoin, WantAmt), SwapResp?,
    [accept_swap(From?, GiveCoin?, GiveAmt?,
                 WantCoin?, WantAmt?, SwapResp)]) :-
    ground(From?), ground(GiveCoin?), integer(GiveAmt?),
    ground(WantCoin?), integer(WantAmt?) | true.

%% The person accepted (injected by respond_swap): give the
%% want-coins, take the give-coins (declines itself if the
%% want-coins are not held).
agent(Id, [accept_swap(From, GiveCoin, GiveAmt, WantCoin,
                       WantAmt, SwapResp?)|UserIn],
      NetIn, Outs, Holdings) :-
    ground(From?), ground(GiveCoin?), integer(GiveAmt?),
    ground(WantCoin?), integer(WantAmt?) |
    sub_coin(WantCoin?, WantAmt?, Holdings?, Status, Holdings1),
    do_accept_swap(Status?, Id?, From?, GiveCoin?, GiveAmt?,
                   Holdings1?, SwapResp, UserIn?, NetIn?, Outs?).
\end{verbatim}

Its manual compilation: the agent forwards the arrived proposal, response writer included, to the mediator; the mediator escrows the writer in the pending table under a fresh \verb|req(N)| --- the card --- and the person's answer routes it back to the agent:

\begin{verbatim}
%% Agent: a friend proposes a swap; forward the writer.
agent(Id, UserIn,
      [msg(From, Id1, swap_propose(GiveCoin, GiveAmt,
                        WantCoin, WantAmt, SwapResp?))|NetIn],
      Outs, Holdings) :-
    Id? =?= Id1?, ground(From?), ground(GiveCoin?),
    integer(GiveAmt?), ground(WantCoin?), integer(WantAmt?) |
    lookup_send('_user',
        msg(agent, '_user', swap_offer(From?, GiveCoin?, GiveAmt?,
                              WantCoin?, WantAmt?, SwapResp)),
        Outs?, Outs1),
    agent(Id?, UserIn?, NetIn?, Outs1?, Holdings?).

%% Mediator: escrow the writer as req(N).
ui_mediator(Id, AgentCh, UserCh, Ps, N) :-
    receive(msg(agent, '_user',
                swap_offer(From, GC, GA, WC, WA, Resp?)),
            AgentCh?, AgentCh1),
    ground(From?), ground(GC?), ground(GA?), ground(WC?),
    ground(WA?) |
    send(swap_offer(From?, GC?, GA?, WC?, WA?, req(N?)),
         UserCh?, UserCh1),
    N1 := N? + 1,
    ui_mediator(Id?, AgentCh1?, UserCh1?,
        [pending(req(N?), swap_pending(Resp, GC?, GA?, WC?, WA?))
         | Ps?], N1?).

%% Mediator: the person accepts; route the escrowed writer back.
ui_mediator(Id, AgentCh, UserCh, Ps, N) :-
    receive(accept_swap(From, ReqId), UserCh?, UserCh1),
    ground(Id?), ground(From?), ground(ReqId?) |
    lookup_pending(ReqId?, Pv, Ps?, Ps1),
    send(msg('_user', Id?, accept_swap(From?, Pv?)),
         AgentCh?, AgentCh1),
    ui_mediator(Id?, AgentCh1?, UserCh1?, Ps1?, N?).
\end{verbatim}
\section{Proofs}\label{app:proofs}

This appendix proves the theorems of Sections~\ref{sec:vglp} and~\ref{sec:elicitation}, and gives the canonical realisation and the canonical compilation with their mappings.

\restateVolitionalSoundness*
\begin{proof}
By Definition~\ref{def:vglp-ts}, a transition fulfilling $v$ at $p$ requires $v$ in the volitional state of $p$ before it and removes exactly $v$; a Change-Volition of $p$ may add and remove volitions; and no other transition changes $V_p$: Reduce with an ordinary clause, Communicate, and Cold-call leave volitional states unchanged, and transitions of other agents leave $p$ stationary.  Hence only a Change-Volition of $p$ adds.

(\ia)  $c^i \rightarrow c^{i+1}$ fulfils $v$, so $v \in V^i_p$; since $V^1_p = \emptyset$, there is a maximal $j < i$ with $v \notin V^j_p$: $c^j \rightarrow c^{j+1}$ adds $v$, so it is a Change-Volition of $p$, and by the maximality of $j$, $v \in V^k_p$ for every $j < k \le i$.

(\ib)  $c^k \rightarrow c^{k+1}$ fulfils $v$, so $v \notin V^{k+1}_p$; $c^{k'} \rightarrow c^{k'+1}$ fulfils $v$, so $v \in V^{k'}_p$; hence some $c^j \rightarrow c^{j+1}$, $k < j < k'$, adds $v$: a Change-Volition of $p$.
\end{proof}

\restateSimpleLiveness*
\begin{proof}
While an instance of $A$ is in $G^k_p$, its reduction with $C$ succeeds --- the volition-guarded reduction under $\theta$ succeeds with fulfilled volition $(C, \theta, \theta')$, held in $V^k_p$ (Definition~\ref{def:vglp-ts}), and its reduction with a volition-guarded clause preceding $C$ does not, no volition of it held; so the first clause whose reduction of the instance succeeds is $C$ or a unit clause preceding $C$ --- $M$ being simple, no other clause precedes $C$ --- and any Reduce of the instance is with such a clause.  By Monotonicity (Proposition~\ref{prop:glp-monotonicity}), a clause whose reduction of the instance succeeds continues to succeed until an instance of $A$ is reduced, so the first succeeding clause moves only earlier in the procedure and is eventually a fixed clause $D$; from that point the Reduce class of $A$ with $D$ is enabled in every configuration, so by liveness a member is taken, reducing an instance of $A$ with $D$.
\end{proof}

\restateCompleteness*
\begin{proof}
If a volition $(C, \theta, \theta')$ is pending at $p$, it is a pending volition of $C$ on some unit goal $A \in G_p$ (Definition~\ref{def:pending}), so the construct of $C$ at $A$ is presented at $p$ and carries the button of $C$, whose tap, its fields collecting the values of $\theta$, adds the pending volition of $C$ on $A$ with answer $\theta$ (Definition~\ref{def:manifest}) --- the volition itself: it is offered at $p$ (Definition~\ref{def:offered}).  Conversely, a tap adds only a pending volition of a clause on a unit goal in $G_p$ (Definition~\ref{def:manifest}), so a volition offered at $p$ is pending at $p$.
\end{proof}

\subsection*{The Canonical Realisation}

Fix a CVA platform (Section~\ref{sec:cva}).  For a platform transaction $t$ with parameters $X_1,\ldots,X_k$, write \verb|t(X1,...,Xk)| for its term, \verb|pre_t| for the guard testing its precondition, \verb|pre'_t| for the restriction of that guard to the message and the state, and \verb|eff_t| for the body kernel performing its effect --- updating the state and appending its messages, each addressed by name, to the output stream.  The state term \verb|s(Known,In,A,T)| carries the known peers, the recorded messages, the platform state, and the local date; \verb|record(M)| is the bookkeeping transaction adding \verb|M| to \verb|In|, with precondition true.  All goals reach the state server on one merged channel and responder goals read its state stream, each through its own branch; \verb|split| branches a channel or stream (a merge on the server side, a distributor on the state side, the standard stream techniques~\cite{shapiro2025glp}), and \verb|drain| closes a terminating goal's remaining streams.

\begin{definition}[Canonical Realisation]\label{def:canonical}
The \temph{canonical realisation} of a CVA platform is the simple vGLP program consisting of the following procedures.
\begin{enumerate}
\item The \temph{state server}: per platform transaction $t$, and for \verb|record|, the clause
\begin{verbatim}
state([apply(t(X1,...,Xk))|Rs], S, Outs, Ss) :-
    pre_t(X1?,...,Xk?, S?) |
    eff_t(X1?,...,Xk?, S?, S1, Outs?, Outs1),
    send(S1?, Ss?, Ss1),
    state(Rs?, S1?, Outs1?, Ss1?).
\end{verbatim}
followed, last, by
\begin{verbatim}
state([apply(_)|Rs], S, Outs, Ss) :-
    otherwise |
    state(Rs?, S?, Outs?, Ss?).
\end{verbatim}
\item Per volitional platform transaction $t$ whose precondition reads the local state, the \temph{request procedure}
\begin{verbatim}
*(X1, ..., Xk)
t(Us) :-
    ground(X1?), ..., ground(Xk?) |
    send(apply(t(X1?,...,Xk?)), Us?, Us1),
    t(Us1?).
\end{verbatim}
\item The \temph{router}: per volitional platform transaction $t$ whose precondition requires the arrived message \verb|m_t(Y1,...,Ym)|, the clause
\begin{verbatim}
route([m_t(Y1,...,Ym)|In], Us, Ss) :-
    ground(Y1?), ..., ground(Ym?) |
    send(apply(record(m_t(Y1?,...,Ym?))), Us?, Us1),
    split(Us1?, Us2, Us3),
    split(Ss?, Ss1, Ss2),
    respond_t(m_t(Y1?,...,Ym?), Ss1?, Us2?),
    route(In?, Us3?, Ss2?).
\end{verbatim}
per reactive platform transaction $t$ triggered by \verb|m_t(Y1,...,Ym)|, the clause
\begin{verbatim}
route([m_t(Y1,...,Ym)|In], Us, Ss) :-
    ground(Y1?), ..., ground(Ym?) |
    send(apply(record(m_t(Y1?,...,Ym?))), Us?, Us1),
    send(apply(t(Y1?,...,Ym?)), Us1?, Us2),
    route(In?, Us2?, Ss?).
\end{verbatim}
and, last, the clause recording every remaining message likewise.
\item Per volitional platform transaction $t$ whose precondition requires the arrived message \verb|m_t(Y1,...,Ym)|, with remaining parameters $X_1,\ldots,X_k$, the \temph{responder procedure}
\begin{verbatim}
*(X1, ..., Xk, Y1?, ..., Ym?)
respond_t(m_t(Y1,...,Ym), [S|Ss], Us) :-
    ground(X1?), ..., ground(Xk?),
    pre_t(Y1?,...,Ym?, X1?,...,Xk?, S?) |
    send(apply(t(Y1?,...,Ym?, X1?,...,Xk?)), Us?, Us1),
    drain(Ss?, Us1?).

respond_t(m_t(Y1,...,Ym), [S|Ss], Us) :-
    ~pre'_t(Y1?,...,Ym?, S?) |
    respond_t(m_t(Y1?,...,Ym?), Ss?, Us?).
\end{verbatim}
\item Per unguarded platform transaction $t$ whose precondition reads the local state, and for Advance-date, the ordinary procedure
\begin{verbatim}
t(Us) :-
    send(apply(t), Us?, Us1),
    t(Us1?).
\end{verbatim}
and for Discover, the request procedure whose \verb|eff| adds its person input to \verb|Known|.
\end{enumerate}
The initial goal is the state server on \verb|s([],[],a0,0)| and the empty streams, one goal per request procedure, the router on the network input stream, and the \verb|split| goals wiring the server channel and the state stream.
\end{definition}

Every procedure is simple: a request or responder procedure has one volition-guarded clause, first, and every other procedure is ordinary.  Every message on \verb|Outs| is delivered by Cold-call.

An \verb|apply| is \temph{in flight} from the Reduce that sends it until the server Reduce that consumes it; the applies in flight at $p$ are linearly ordered by their position on the merged server channel.  The \temph{effective state} of $p$ is the state and output stream obtained from the server's by serving the applies in flight, in order, per the server clauses.

\begin{definition}[Realisation Mapping]\label{def:realisation-sigma}
Given a configuration of the vmaGLP transition system over $P$ and the canonical realisation, the mapping $\sigma$ yields the CVA configuration over $P$ with, at each $p$, where $s(\mathit{Known},\mathit{In},A,T)$ and $\mathit{Outs}$ are the effective state of $p$:
\begin{itemize}
\item $\mathit{known}_p = \mathit{Known}$, \; $a_p = A$, \; $t_p = T$;
\item $o_p$ = the set of messages on $\mathit{Outs}$;
\item $i_p = \mathit{In} \,\cup\,$ the set of messages on the network input stream of $p$;
\item the volitional state of $p$ is $\{[t(C, \theta, \theta')] : (C, \theta, \theta') \in V_p\}$, with $t(C, \theta, \theta')$ the platform-transaction instance of the clause $C$ with parameters $\theta$ --- on the message carrying $\theta'$ when $C$ is a responder clause --- and $[\cdot]$ the platform's transaction equivalence.
\end{itemize}
\end{definition}

\restateRealisation*
\begin{proof}
$\sigma$ maps the initial configuration to the initial CVA configuration with empty volitional states.

Safety, per transition.  A Change-Volition of $p$ maps to a change-volition of $p$, its classes those of the added and removed volitions.  A Reduce with the volition-guarded clause of a request or responder procedure of transaction $t$ fulfils its volition $(C,\theta,\theta')$ and appends \verb|apply(|$t(\bar v)$\verb|)| to the server channel, $\bar v$ the person inputs and message values: if \verb|pre_t| holds on the effective state, the effective state advances by \verb|eff_t| and the transition maps to the platform transaction $t(\bar v)$, whose class $[t(C,\theta,\theta')]$ is removed; otherwise the effective state is unchanged and the transition maps to the change-volition removing that class.  A Reduce of a router clause appending a reactive \verb|apply(|$t(\bar v)$\verb|)| maps to the reactive transaction $t(\bar v)$ if \verb|pre_t| holds on the effective state, and to a stutter otherwise; its \verb|record| apply leaves $i_p$ unchanged --- the message moves from the input stream into $\mathit{In}$ --- as does the router's consumption of the message, so the remainder of the clause is a stutter.  A Reduce of an ordinary perpetual procedure ($t$ unguarded, Advance-date) maps to its transaction or to a stutter, likewise by \verb|pre_t| on the effective state.  A Reduce of the state server consumes the first apply in flight and leaves the effective state unchanged: a stutter.  A Reduce of a \verb|split|, \verb|drain|, or state-advancing clause changes no component $\sigma$ reads: a stutter.  A Cold-call delivering a message maps to Communicate; a Communicate, an assignment on a channel, maps to a stutter.  A responder consumes its message from no inbox: $\sigma$ retains every recorded message in $i_p$, and CVA platform transactions do not remove messages from the inbox.

Correctness: let $\rho$ be a proper correct run and suppose a CVA class in the domain of the equivalence is enabled in every configuration of a suffix of $\sigma(\rho)$ with no member taken.  If the class is Communicate, the message stays on the output stream, the Cold-call class is enabled throughout, and the liveness of $\rho$ takes it.  If it is an unguarded platform transaction or Advance-date, its precondition persists on the effective state, its perpetual goal's Reduce class is enabled throughout, and the liveness of $\rho$ takes a Reduce mapping to a member.  If it is a volitional platform transaction, its precondition and its willed class persist, so from some configuration on the corresponding request or responder goal holds a volition of the class: a request goal's guard, requiring only its person inputs ground, succeeds on the willed inputs; a responder goal's guard reads the message and the state-stream head, and the state stream advancing whenever \verb|pre'_t| fails yields a head on which it holds.  Either way the volition-guarded reduction of the goal with its single volition-guarded clause succeeds under the willed answer and, by Monotonicity (Proposition~\ref{prop:glp-monotonicity}), continues to succeed until the goal is reduced; the procedure being simple, its volition-guarded clause single and first, Theorem~\ref{thm:simple-liveness} takes a Reduce fulfilling a willed volition, and \verb|pre_t| holding on the effective state, it maps to a member of the class.  In every case a member is taken, against the hypothesis; hence $\sigma(\rho)$ is correct.

The volition correspondence is the construction's request and responder procedures (cases 2 and 4): every volitional platform transaction $t$ is realised by the single volition-guarded clause of its own procedure --- posing the empty question when its precondition reads the local state, posing the arrived message when it requires one --- and, safety's per-transition cases carrying no occurrence of $[t(\cdot)]$ on any other transition, the occurrences of its class in $\sigma(\rho)$ are exactly the images of that clause's volition-guarded reductions.
\end{proof}

\subsection*{The Canonical Compilation}

The compilation replaces each volition-guarded clause of the canonical realisation $M$ by ordinary clauses receiving the person's granted terms, and adds the mediator.  Per request procedure $t$ its \temph{command} is \verb|c_t(X1,...,Xk)|; per responder procedure $t$ its \temph{answer} is \verb|a_t(X1,...,Xk)|, granted by the person as \verb|a_t(X1,...,Xk,ReqId)|; an \temph{offer} \verb|offer(Q,R)| pairs a responder's question with a fresh answer writer, and a \temph{card} \verb|card(Q,req(N))| carries the question to the person under a request identifier.  \verb|lookup| consumes the pending entry of a request identifier (the deployed \verb|lookup_pending|, Appendix~B), and the body assignment \verb|:=| binds the escrowed writer.

\begin{definition}[Canonical Compilation]\label{def:canonical-compilation}
Given the canonical realisation $M$ of a CVA platform (Definition~\ref{def:canonical}), its \temph{canonical compilation} $\lceil M\rceil$ is the GLP program consisting of the procedures of $M$, with each volition-guarded clause compiled and all else unchanged, and the mediator:
\begin{enumerate}
\item Per request procedure $t$ of $M$, the compiled procedure
\begin{verbatim}
t([c_t(X1,...,Xk)|Cs], Us) :-
    ground(X1?), ..., ground(Xk?) |
    send(apply(t(X1?,...,Xk?)), Us?, Us1),
    t(Cs?, Us1?).
\end{verbatim}
\item Per responder procedure $t$ of $M$, the compiled procedures --- the forward clause, then the await procedure, its answer clause first and the state-advancing clause of $M$ retained, last:
\begin{verbatim}
respond_t(m_t(Y1,...,Ym), Ss, Us, Med) :-
    ground(Y1?), ..., ground(Ym?) |
    send(offer(m_t(Y1?,...,Ym?), R), Med?, Med1),
    await_t(m_t(Y1?,...,Ym?), R?, Ss?, Us?, Med1?).

await_t(m_t(Y1,...,Ym), a_t(X1,...,Xk), [S|Ss], Us, Med) :-
    ground(X1?), ..., ground(Xk?),
    pre_t(Y1?,...,Ym?, X1?,...,Xk?, S?) |
    send(apply(t(Y1?,...,Ym?, X1?,...,Xk?)), Us?, Us1),
    drain(Ss?, Us1?, Med?).

await_t(m_t(Y1,...,Ym), R, [S|Ss], Us, Med) :-
    ~pre'_t(Y1?,...,Ym?, S?) |
    await_t(m_t(Y1?,...,Ym?), R?, Ss?, Us?, Med?).
\end{verbatim}
\item The \temph{mediator}, between the person channel and the agent --- the escrow clause, a routing clause per answer \verb|a_t|, and a pass-through clause per command \verb|c_t|:
\begin{verbatim}
med(UserCh, AgentCh, Ps, N) :-
    receive(offer(Q, R?), AgentCh?, AgentCh1),
    ground(Q?) |
    send(card(Q?, req(N?)), UserCh?, UserCh1),
    N1 := N? + 1,
    med(UserCh1?, AgentCh1?,
        [pending(req(N?), R)|Ps?], N1?).

med(UserCh, AgentCh, Ps, N) :-
    receive(a_t(X1,...,Xk,ReqId), UserCh?, UserCh1),
    ground(X1?), ..., ground(Xk?), ground(ReqId?) |
    lookup(ReqId?, R, Ps?, Ps1),
    R := a_t(X1?,...,Xk?),
    med(UserCh1?, AgentCh?, Ps1?, N?).

med(UserCh, AgentCh, Ps, N) :-
    receive(c_t(X1,...,Xk), UserCh?, UserCh1),
    ground(X1?), ..., ground(Xk?) |
    send(c_t(X1?,...,Xk?), AgentCh?, AgentCh1),
    med(UserCh1?, AgentCh1?, Ps?, N?).
\end{verbatim}
\item The \temph{constructs}: the manifest of $M$ (Definition~\ref{def:manifest}), its acts grants (Definition~\ref{def:person-ts}): the form of a request procedure $t$ grants \verb|c_t(v1,...,vk)| with the collected values; the card of an escrowed offer grants, per button, the answer \verb|a_t(v1,...,vk,ReqId)| of the chosen clause with the card's identifier; cancel grants nothing.
\end{enumerate}
The initial goal of $\lceil M\rceil$ is that of $M$ with each command routed to its procedure's stream, each responder given its mediator channel, and the \verb|med| goal, on the empty pending table and counter $0$, spliced between the person channel and the agent; the mediator's person side is the designated person channel.
\end{definition}

\begin{definition}[Compilation Mapping]\label{def:compilation-sigma}
Given a configuration of GLP with a person over $\lceil M\rceil$, the mapping $\sigma$ yields the vcGLP$(M)$ configuration $(V, (G, \sigma_r))$ with:
\begin{itemize}
\item $G$ = the multiset of the agent's goals, each read back to its source: a compiled request goal \verb|t(Cs,Us)| as the source goal \verb|t(Us)|; a \verb|respond_t| or \verb|await_t| goal as the source responder goal; the \verb|med|, \verb|split|, and \verb|drain| goals contributing nothing;
\item $V$ = $\{(C, \theta, \theta')\}$ per command \verb|c_C(v1,...,vk)| granted and not yet consumed by its request clause's Reduce, with $\theta$ the granted values and $\theta'$ empty, together with $\{(C, \theta, \theta')\}$ per answer granted on the card of an escrowed offer and not yet consumed by its answer clause's Reduce, with $C$ the chosen responder clause, $\theta$ the granted person inputs, and $\theta'$ the context read from the offer;
\item $\sigma_r$ = the restriction of the accumulated readers substitution to the variables of the source goals.
\end{itemize}
\end{definition}

\restateCompilation*
\begin{proof}
$\sigma$ maps the initial configuration to $(\emptyset, (G_0, \emptyset))$, the initial vcGLP$(M)$ configuration.

Safety, per transition of GLP with a person over $\lceil M\rceil$: a grant of a command \verb|c_C(v1,...,vk)| maps to the Change-Volition adding $(C, \theta, \theta')$; a grant of an answer on the card of an escrowed offer maps to the Change-Volition adding $(C, \theta, \theta')$ of the chosen responder clause; both leave $G$ and $\sigma_r$ unchanged.  A Reduce of a compiled request clause consuming a granted command maps to the vcGLP Reduce of the source request goal with $C$: the compiled guard is the source guard on the same values, the bodies coincide, and the consumed command's volition, held in $V$ by $\sigma$, is fulfilled.  A Reduce of an answer clause consuming a granted answer likewise maps to the vcGLP Reduce of the source responder goal with $C$, fulfilling its volition.  A Reduce of a forward clause, of a mediator clause --- escrow, routing, or pass-through --- and a Communicate on the person or mediator channels change no component $\sigma$ reads: the source responder goal is read back unchanged whether its offer is unsent, escrowed, or displayed, and a granted command remains granted and unconsumed while it moves through the mediator; each is a stutter.  A Reduce of any other clause is a Reduce of $M$'s own ordinary clause and maps to itself; a Communicate within the agent maps to the vcGLP Communicate of the same assignment, or to a stutter when the assigned reader occurs in no source goal.

Correctness: let $\rho$ be a proper correct run of GLP with a person over $\lceil M\rceil$ and suppose a vcGLP class in the domain of the equivalence is enabled in every configuration of a suffix of $\sigma(\rho)$ with no member taken.  A Communicate class, or a Reduce class of an ordinary clause, is enabled in the corresponding suffix of $\rho$, whose liveness takes it, and its image is a member of the class.  For a Reduce class of a source goal $A$ with volition-guarded clause $C$: enabledness throughout the suffix gives a volition $(C, \theta, \theta') \in V$ and the volition-guarded reduction under $\theta$ succeeding throughout, so by $\sigma$ some command or answer of $C$ is granted and unconsumed throughout; the mediator's pass-through or routing clause and the connecting Communicate transitions, taken in stream order by the liveness of $\rho$, deliver the granted command or bind the escrowed answer writer, after which the Reduce class of the compiled goal with the compiled clause of $C$ is enabled until taken --- the compiled request procedure has one clause, the answer clause stands first in its await procedure, and the state-advancing clause's guard, complementary to the precondition, fails while the precondition holds --- and the liveness of $\rho$ takes it.  The taken Reduce maps to a member of the class.  Hence $\sigma(\rho)$ is correct.
\end{proof}

\end{document}